\documentclass[10pt,aps,prb,onecolumn,floatfix,nofootinbib,superscriptaddress,notitlepage]{revtex4-2}

\usepackage{amsmath,amsfonts}
\usepackage{algorithmic}
\usepackage{graphicx}
\usepackage{subcaption}
\usepackage{textcomp}
\usepackage[dvipsnames]{xcolor}
\usepackage{dsfont} 
\usepackage{xspace} 
\usepackage[colorlinks=true, linkcolor=blue, citecolor=blue, urlcolor=blue]{hyperref}
\usepackage[capitalise, nameinlink]{cleveref}
\usepackage[most]{tcolorbox}
\usepackage{array}
\usepackage{adjustbox}
\usepackage[normalem]{ulem}

\usepackage{amsthm}
\newtheorem{theorem}{Theorem}
\newtheorem{definition}{Definition}
\newtheorem{corollary}{Corollary}
\newtheorem{lemma}{Lemma}

\newcolumntype{C}[1]{>{\centering\arraybackslash}p{#1}}

\usepackage{tikz}
\usepackage{venndiagram}
\usetikzlibrary{decorations.pathreplacing, calligraphy, decorations.markings}
\usetikzlibrary{arrows, automata, positioning}
\usetikzlibrary{calc}

\definecolor{purple}{HTML}{E5E3F5}
\definecolor{amaranth}{HTML}{F5CFD8}
\definecolor{yellow}{HTML}{FFEAB6}

\newcommand{\MPSTensor}[3]{
	\begin{scope}[shift={(#1)}]
		\draw (-1,0) -- (1,0);
		\draw (0,1) -- (0,0);
		\filldraw[fill=#3] (-1/2,-1/2) -- (-1/2,1/2) -- (1/2,1/2) -- (1/2,-1/2) -- (-1/2,-1/2);
		\draw (0,0) node {\scriptsize #2};
	\end{scope}
}

\newcommand{\MPSTensorRect}[4]{
	\begin{scope}[shift={(#1)}]
		\draw (-1-#4,0) -- (1+#4,0);
		\draw (0,1) -- (0,0);
		\filldraw[fill=#3] (-1/2-#4,-1/2) -- (-1/2-#4,1/2) -- (1/2+#4,1/2) -- (1/2+#4,-1/2) -- (-1/2-#4,-1/2);
		\draw (0,0) node {\scriptsize #2};
	\end{scope}
}

\newcommand{\LeftMPSTensor}[3]{
	\begin{scope}[shift={(#1)}]
		\draw (0,0) -- (1,0);
		\draw (0,1) -- (0,0);
		\filldraw[fill=#3] (-1/2,-1/2) -- (-1/2,1/2) -- (1/2,1/2) -- (1/2,-1/2) -- (-1/2,-1/2);
		\draw (0,0) node {\scriptsize #2};
	\end{scope}
}

\newcommand{\LeftMPSTensorRect}[4]{
	\begin{scope}[shift={(#1)}]
		\draw (0,0) -- (1+#4,0);
		\draw (0,1) -- (0,0);
		\filldraw[fill=#3] (-1/2-#4,-1/2) -- (-1/2-#4,1/2) -- (1/2+#4,1/2) -- (1/2+#4,-1/2) -- (-1/2-#4,-1/2);
		\draw (0,0) node {\scriptsize #2};
	\end{scope}
}

\newcommand{\RightMPSTensor}[3]{
	\begin{scope}[shift={(#1)}]
		\draw (-1,0) -- (0,0);
		\draw (0,1) -- (0,0);
		\filldraw[fill=#3] (-1/2,-1/2) -- (-1/2,1/2) -- (1/2,1/2) -- (1/2,-1/2) -- (-1/2,-1/2);
		\draw (0,0) node {\scriptsize #2};
	\end{scope}
}

\newcommand{\RightMPSTensorRect}[4]{
	\begin{scope}[shift={(#1)}]
		\draw (-1-#4,0) -- (0,0);
		\draw (0,1) -- (0,0);
		\filldraw[fill=#3] (-1/2-#4,-1/2) -- (-1/2-#4,1/2) -- (1/2+#4,1/2) -- (1/2+#4,-1/2) -- (-1/2-#4,-1/2);
		\draw (0,0) node {\scriptsize #2};
	\end{scope}
}

\newcommand{\FullMPS}[3]{
	\begin{scope}[shift={(#1)}]
		\draw[shift={(0,0)},dotted] (0,0) -- (4.5,0);
        \MPSTensor{0,0}{#2}{#3}
        \MPSTensor{1.5,0}{#2}{#3}
        \MPSTensor{5,0}{#2}{#3}
	\end{scope}
}
\def\BibTeX{{\rm B\kern-.05em{\sc i\kern-.025em b}\kern-.08em
    T\kern-.1667em\lower.7ex\hbox{E}\kern-.125emX}}

\definecolor{keynoteblue}{RGB}{40,73,93} 
\definecolor{Arm}{rgb}{0.3,0.2,0.7}

\definecolor{carnelian}{rgb}{0.7, 0.11, 0.11}
\definecolor{darkspringgreen}{rgb}{0.09, 0.45, 0.27}

\newcommand{\quoted}[1]{``#1''}
\usepackage{xspace}
\newcommand{\SeqRLSP}{SeqRLSP\xspace}
\newcommand{\TreeRLSP}{TreeRLSP\xspace}
\newcommand{\spar}{s}
\newcommand{\CNOT}{\mathrm{CNOT}}
\newcommand{\ceil}[1]{\lceil #1 \rceil}

\newcommand{\ket}[1]{| #1 \rangle}
\newcommand{\bra}[1]{\langle #1 |}

\AtBeginDocument{%
  \providecommand\BibTeX{{%
    Bib\TeX}}}

\begin{document}

\title{Compiling Quantum Regular Language States}

\author{Armando Bellante}
\email{armando.bellante@mpq.mpg.de}

\author{Reinis Irmejs}
\email{reinis.irmejs@mpq.mpg.de}

\author{Marta Florido-Llinàs}
\author{María Cea Fernández}
\author{Marianna Crupi}
\affiliation{Max-Planck-Institut für Quantenoptik, Hans-Kopfermann-Straße 1, D-85748 Garching, Germany}
\affiliation{Munich Center for Quantum Science and Technology (MCQST), Schellingstraße 4, D-80799 Munich, Germany}

\author{Matthew Kiser}
\affiliation{TUM School of Natural Sciences, Technical University of Munich, Boltzmannstr. 10, 85748 Garching, Germany}
\affiliation{IQM Quantum Computers, Georg-Brauchle-Ring 23-25, 80992 Munich, Germany}

\author{J. Ignacio Cirac}
\affiliation{Max-Planck-Institut für Quantenoptik, Hans-Kopfermann-Straße 1, D-85748 Garching, Germany}
\affiliation{Munich Center for Quantum Science and Technology (MCQST), Schellingstraße 4, D-80799 Munich, Germany}

\begin{abstract}
    State preparation compilers for quantum computers typically sit at two extremes: general-purpose routines that treat the target as an opaque amplitude vector, and bespoke constructions for a handful of well-known state families. 
    We ask whether a compiler can instead accept simple, structure-aware specifications while providing predictable resource guarantees. 
    We answer this by designing and implementing a quantum state-preparation compiler for regular language states (RLS): uniform superpositions over bitstrings accepted by a regular description, and their complements. Users describe the target state via (i) a finite set of bitstrings, (ii) a regular expression, or (iii) a deterministic finite automaton (DFA), optionally with a complement flag. 
    By translating the input to a DFA, minimizing it, and mapping it to an optimal matrix product state (MPS), the compiler obtains an intermediate representation (IR) that exposes and compresses hidden structure.
    The efficient DFA representation and minimization offloads expensive linear algebra computation in exchange of simpler automata manipulations.
    The combination of the regular-language frontend and this IR gives concise specifications not only for RLS but also for their complements that might otherwise require exponentially large state descriptions. 
    This enables state preparation of an RLS or its complement with the same asymptotic resources and compile time, which to our knowledge is not supported by existing compilers. 
    We outline two hardware-aware backends: SeqRLSP, which yields linear-depth, ancilla-free circuits for linear nearest-neighbor architectures via sequential generation, and TreeRLSP, which achieves logarithmic depth on all-to-all connectivity via a tree tensor network. On the theory side, we prove circuit-depth and gate-count bounds that scale with the system size and the maximal Schmidt rank of the target state, and we give compile-time bounds that expose the benefit of the initial DFA representation.
    We implement the full pipeline and evaluate it on Dicke and W states, random uniform superpositions, and complement states, comparing against general-purpose, sparse-state, and specialized baselines.
\end{abstract}

\maketitle

\section{Introduction}
A quantum computer is a programmable physical system that not only obeys the laws of quantum mechanics but \emph{manifests} them at device scale, making superposition and entanglement controllable and exploitable.
This capability expands the way we can process information, promising advantages over conventional classical computers. In particular, there exist proven exponential advantages over the best-known classical algorithms on several problems, with applications ranging from cryptanalysis via integer factorization and the discrete logarithm problem~\cite{shor1994algorithms}, to problems in algebraic number theory~\cite{hallgren2007polynomial}, and simulation of many-body physics~\cite{lloydsimulation, somma2025shadow}. 
Under appropriate input models, polynomial and even super-polynomial advantages also arise in optimization~\cite{harrow2009quantum, chakraborty2023quantum, chen2023quantum, bellante2022quantum, Abbas_2024, jordan2025optimization}, statistics~\cite{montanaro2015quantum, cornelissen2022near}, and machine learning~\cite{rebentrost2014quantum, kerenidis2016quantum, Biamonte_2017, du2025, lewis2025}.

Over the past years, quantum computers have steadily improved in scale and fidelity~\cite{bluvstein2024logical, belowthreshold}, increasing the need for automatic methods that efficiently map high-level user intent to low-level quantum circuits, both in terms of \emph{compile-time} and in \emph{quantum resources} (\emph{circuit depth}, \emph{gate count}, \emph{qubit count}).
In practice, programming these devices resembles a Field-Programmable Gate Array (FPGA) synthesis workflow.
In the quantum circuit model, an algorithm is a tight classical–quantum loop: a classical host compiles a quantum circuit, the quantum processor initializes the qubits in the all-zero state, applies the prescribed quantum gates and measures the final state, returning the sampled bitstring. The host aggregates outcomes, interprets the information, and possibly adapts the next circuit until the task is solved.

Many algorithms are most naturally phrased as having query access to a circuit that prepares a target state from the all-zero state: linear system solvers assume preparation of the right-hand side vector~\cite{harrow2009quantum}; amplitude amplification and estimation assume an oracle that contains and flags the \quoted{good} subspace~\cite{brassard2000quantum}; data-driven methods assume access to a loading circuit that creates a superposition encoding the dataset~\cite{kerenidis2019q, bellante2022quantumqadra, kerenidis2020classification}.

Since state preparation is such an important part of many quantum algorithms, much like {the function prologue that a classical compiler injects around every call, it is crucial that this subroutine is compiled efficiently. 
Particularly, the host needs to synthesize a quantum circuit that prepares the target state, starting from some user description.
In general, an exact classical description of a quantum state on $N$ quantum bits (qubits) requires a specification of up to $2^N$ complex amplitudes, due to the nature of quantum superposition. This poses a fundamental barrier as the number of qubits increase.

In practice, target specifications reach the compiler in two modes. 
In the first, the user supplies an explicit support - a list or dictionary of amplitudes - which is limited to \emph{sparse} quantum states, where the number of non-zero amplitudes is $\ll2^N$, or to \emph{small system sizes}. 
In this case, general-purpose routines treat the input as an arbitrary vector and typically miss exploitable regularities beyond sparsity, yielding inefficient circuits. 
In the second, the user recognizes that the state belongs to a well-studied family (\emph{e.g.}, GHZ~\cite{Greenberger_1989}, W~\cite{Dur_2000_Wstate}, Dicke~\cite{Dicke_1954}) and invokes a specialized routine that synthesizes an efficient purpose-built circuit.

This gap between universal but oblivious methods and bespoke recipes leads to our question:
\begin{center}
    \emph{Can we offer a simple, compact, structure-aware specification that does not require naming a specific state family, yet compiles to circuits with predictable and efficient resource bounds?}
\end{center}

We answer in the middle ground and focus on a broad but disciplined class of targets, Regular Language States (RLS): the \emph{uniform} superpositions over length-$N$ bitstrings recognized by a regular description, and, symmetrically, their \emph{complements} - the uniform superposition of all length-$N$ bitstrings except those described.
In this work, we limit our scope to uniform superpositions rather than arbitrary complex coefficients. 
This choice keeps the class easier to analyze and still covers many representative families (GHZ, W, Dicke, fixed parity/weight, simple prefix/suffix) \cite{Dur_2000_Wstate, Dicke_1954, Greenberger_1989}.

Prior work connects regular languages (RLs) to matrix product states (MPS)~\cite{florido2024regular, crosswhite_2008}, a special class of tensor networks (TNs)~\cite{orus_2014_introTNs}, and, separately, shows how to realize a given MPS as a circuit via sequential generation~\cite{schon_2005_sequentialMPS}, tree/parallel structures~\cite{Cirac_2009_trees1, malz2024preparation}, or measurement-and-feedback schemes~\cite{piroli2024approximating}.
We bridge the gap between regular descriptions and state synthesis, combining these independent facts into an end-to-end \emph{compiler pipeline}~(\Cref{fig:pipeline}). 
Users can provide a regular description for strings of length $N$ as 
\begin{itemize}
    \item[---] a set (or list, or dictionary) of bitstrings,
    \item[---] a regular expression (regex),
    \item[---] a Deterministic Finite Automaton (DFA) accepting the language.
\end{itemize}
Optionally, they may request the complement of any description.
As shown in \cref{fig:pipeline}, the compiler canonicalizes the input to a minimal DFA, translates the DFA to an MPS, constructs quantum operations from the MPS, and translates them into a quantum circuit.
Conceptually, the frontend maps user specifications into a DFA/MPS Intermediate Representation (IR), and the backend chooses an MPS-to-circuit realization based on hardware constraints.
This yields a modular pipeline that leverages results from formal-language theory, TNs, and MPS synthesis to automatically uncover structure and optimize the circuit synthesis.
On the frontend, regular descriptions offer a generic yet structure-aware specification, while the IR and the backend ensure that the compilation provides predictable and efficient resource bounds.

\begin{figure}[t]
    \centering
    \includegraphics[width=\linewidth]{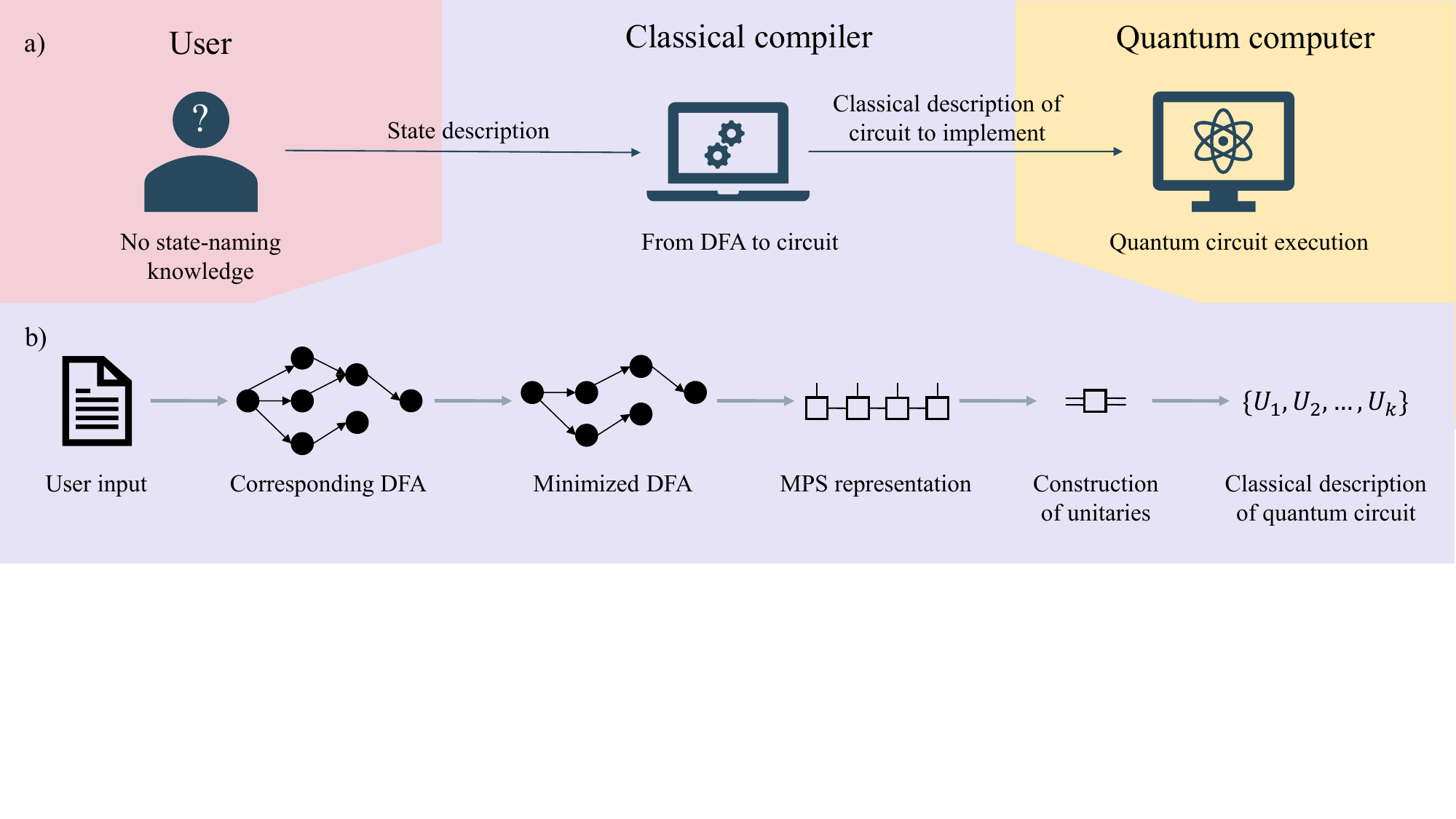} \vspace{-9em}
    \caption{(a) A schematic illustration of the general setting. It starts with a user, possibly without prior knowledge of the state’s name or structure, providing a description of the desired state—such as a set of strings, a regex, or a DFA, and (optionally) a complement flag—to the classical compiler. The compiler then generates a quantum circuit using a gate set compatible with the target hardware, which executes the resulting instructions. (b) Summary of our compilation pipeline using DFAs and MPSs. The user input (regex, DFA or set of strings) is converted into a DFA, which is then minimized and translated into an MPS. The MPS is decomposed into unitaries and compiled into a gate instruction set to be executed on the quantum computer. }
    \label{fig:pipeline}
\end{figure}

Each way of specifying the target state has its own advantages.
First, the user can provide unstructured input as a set, list, or dictionary, and the pipeline will still not treat it as an opaque vector: the frontend converts it into a DFA, placing it in the same IR as the other inputs and enabling the compiler to expose structure.
Second, when users know the structure they want, they can express it directly with a regex or DFA without naming a specific state family: for example, a uniform superposition over all one-hot strings of length $N$ can be written as a simple regex pattern, without knowing the term \quoted{W state}. 
Third, although in this work we focus on RLs, the DFA input also extends use beyond \emph{strictly regular} families: for instance, Motzkin states arise from the (non-regular) Dyck language of well-parenthesized strings~\cite{movassagh2016supercritical, gopalakrishnan2025pushdown}, yet for any fixed length $N$ they can still be concisely expressed by a DFA.
Finally, complements are a native construct: instead of listing all allowed strings, the user can describe a small forbidden language and request the uniform superposition over all remaining bitstrings. 
This keeps the classical description compact even when the quantum support is exponentially large, and, as we will show, does not substantially increase the structural complexity of the state.

The optimization passes on the IR first minimize the DFA and then refine the internal dimensions of the MPS tensors, uncovering structure in the target state.
Minimizing the DFA before acting on the MPS shifts work away from expensive linear-algebra operations on large tensors and towards simpler automata manipulations. 
This makes compilation more efficient.
More broadly, the DFA-MPS correspondence lets us systematically derive compact MPS representations from user-level specifications, rather than assuming that an expert has already provided a good MPS, bridging the gap to prior MPS-based synthesis work.

From the IR, the backend synthesizes hardware-aware circuits. 
In this work we outline two implementations: a sequential backend, \SeqRLSP, for linear nearest-neighbor devices, which achieves linear depth without ancillae~\cite{schon_2005_sequentialMPS}, and a tree-based backend, \TreeRLSP, for all-to-all devices, which achieves logarithmic depth by parallelizing the MPS realization~\cite{Cirac_2009_trees1, malz2024preparation}.

\paragraph{Contributions} In summary, our contributions are:
\begin{enumerate}
    \item \emph{Specification.} We introduce a \emph{regular} description for state preparation (regex/DFA/finite set) together with a \emph{complement} option. This lets users describe large, structured supports concisely without naming families, and extends naturally to fixed-length slices of non-regular families.
    \item \emph{Compilation.} We bridge the gap between the \emph{RLs $\rightarrow$ MPS} and \emph{MPS $\rightarrow$ circuits} literature, providing an end-to-end compiler with a DFA/MPS IR and two hardware-aware backends, \SeqRLSP and \TreeRLSP.
    \item \emph{Resource use and compile time.} We give explicit circuit resource bounds in terms of $N$ and the target state's maximal Schmidt rank $\chi$ (\cref{def:schmidt-rank-bipartition}, \cref{thm:seqRLSP,thm:treeRLSP}), and provide worst-case compile time bounds (\cref{thm: compile-time dfa}, \cref{coro: compile-time set of strings}).
    \item \emph{Theory for complements.} We prove that complements compile with the \emph{same asymptotic quantum cost} as their base description, since their maximal Schmidt rank $\chi$ increases by at most $1$ (\cref{thm:complementRLS}), tying the entanglement structure of a state to that of its complement.
    \item \emph{Evaluation.} We implement the full stack and support our claims by benchmarking against general-purpose, sparse-state, and Dicke/W-specialized baselines across Dicke/W, random uniform superpositions, and complements, reporting \emph{circuit depth}, \emph{gates count}, \emph{ancillae}, and \emph{compile time}.
\end{enumerate}

\paragraph{Document structure} \Cref{sec: background} presents the relevant background, reviewing elements of quantum computation and circuit design, related compilers, RLS, and MPS. We also review existing quantum state compilers.
\Cref{sec:methodology} presents our methodology and outlines each of the steps in our compiler. 
In \Cref{sec: costs and guarantees}, we discuss compile-time complexity and bound the quantum resources for RLS preparation using our compiler, further relating the entanglement structure of an RLS to its complement.
Finally, \Cref{sec:experiments} presents numerical experiments benchmarking our compiler against existing methods, and \cref{sec:conclusions} concludes the paper, highlighting potential future work.

\section{Background}
\label{sec: background}
In this section, we review the essential background for our work, including quantum computation, RLS, finite-state automata, and MPS, as well as related compiler and state-preparation techniques. For deeper treatments, see \citet{nielsen2010quantum} for quantum computing, \citet{Cirac_review_2021} and \citet{orus_2014_introTNs} for MPS and TNs, and \citet{Hopcroft_2001_book_automata} for finite-state automata and RLs.

\subsection{Quantum Computation}

\paragraph{Quantum states} 
A single \emph{qudit} has \emph{physical dimension} $d$ with the standard (computational) basis vectors $\{\ket{0},\dots,\ket{d-1}\}$. These labels are an alphabet $\Sigma$ with $|\Sigma|=d$ (for qubits, $d{=}2$ and $\Sigma=\{0,1\}$).
A register of $N$ qudits is in the Hilbert space $(\mathbb{C}^d)^{\otimes N}$ whose basis vectors are tensor products of the individual qudit states $\ket{x_1}\!\otimes\!\cdots\!\otimes\!\ket{x_N}$ indexed by strings $x=x_1\cdots x_N\in\Sigma^N$; we write $\ket{x}$ for short.
A (pure) quantum state is a unit vector $\ket{\psi}=\sum_{x\in\Sigma^N}\alpha_x\,\ket{x}$ such that $\sum_x |\alpha_x|^2=1,$ for the complex \emph{amplitudes} $\alpha_x$, and a global phase is physically irrelevant. 
Thus, a generic $N$-qudit state is specified by $d^N$ amplitudes. 
For a finite set $S \subseteq\Sigma^N$, the \emph{uniform superposition} is $\ket{u_S}=\frac{1}{\sqrt{|S|}}\sum_{x\in S}\ket{x}$; i.e., $\alpha_x$ is constant on $S$ and $0$ outside.
In particular, for qubits, it is common to write the state $\ket{+}^{\otimes N}=\frac{1}{\sqrt{2^N}}\sum_{x\in\{0,1\}^N}\ket{x}$, which is the uniform superposition over all $N$-bit strings.

\paragraph{Unitary evolutions and quantum circuits} 
Closed quantum systems evolve under unitary operators, that is, invertible linear maps on $(\mathbb{C}^d)^{\otimes N}$ that preserve the inner product. 
For state preparation, a unitary $U$ acting on $N$ qudits maps the all-zero state to the target, $U\ket{0^N}=\ket{\psi}$.
The circuit model leverages the fact that any $N$-qubit unitary can be exactly decomposed into \emph{smaller} unitaries acting on $1$ or $2$ qudits, which are called \emph{gates}.
Since hardware allows for only a fixed, finite gate set that must be universal, these arbitrary gates need to be efficiently approximated to a chosen precision during synthesis (\emph{e.g.}, via Solovay-Kitaev~\cite{kitaev1997quantum,dawson2006solovay}). 

\paragraph{Quantum hardware} 
For our purposes, a device is described by (i) a calibrated \emph{native, finite, and universal gate set}, and (ii) a \emph{connectivity graph} indicating which qudits can interact directly. 
Two common connectivity models are \emph{linear nearest-neighbor} LNN, where interactions are only between adjacent sites (necessitating SWAP routing for distant gates), and \emph{all-to-all}, where any pair can interact directly.
These constraints shape circuit cost: in the setting of LNN, routing increases the gate count and depth, but gates can typically be performed in parallel, while all-to-all favors shallower depths, but might not allow parallel gate operations.
The compiler \emph{transpiles} abstract circuits to hardware by decomposing arbitrary gates into native gates with connectivity-aware routing.

\paragraph{Resource count} The cost model mirrors the one of classical boolean circuits: the figures of merit are the \emph{critical circuit depth}, the \emph{total elementary gates count}, and the \emph{overall number of qudits}.
For state preparation, $N$ is a trivial lower bound on the number of qudits; therefore, we quantify the number of auxiliary qudits (also referred to as \emph{ancillae}).

\subsection{State Preparation and Compilers}
\label{sec:literature_review}
In the following, we review some representative methods for quantum state preparation, ranging from general-purpose compilers to algorithms targeting specific structured families of states. We focus on qubit methods, setting $d=2$.

\paragraph{General purpose compilers}
General-purpose quantum programming frameworks, such as \emph{Qiskit}~\cite{qiskit,qiskit_isometry} and \emph{Qualtran}~\cite{berry2019qubitization,babbush2018encoding}, include state-preparation modules that compile arbitrary target states, defined by their amplitude vectors, into executable quantum circuits.
These methods are universal and guarantee exact or approximate state preparation, but current implementations either struggle or are not designed to recognize and exploit structure in the input description. This shortcoming requires users to design custom routines to obtain better circuits. 
This limitation is practical rather than fundamental, and future compilers could in principle identify such structure to reduce resource costs automatically. 
In its present form, the complexity of \emph{Qiskit} scales with the full Hilbert-space dimension, making the method suitable mainly for small systems or proof-of-concept studies. 
On the other hand, \emph{Qualtran} implements a more recent technique that introduces ancillae and performs better on sparse states, which we discuss in the next paragraph.

\paragraph{Sparse states}   
Sparse quantum states with only a small number $\spar \ll 2^N$ of nonzero amplitudes can be compiled far more efficiently than arbitrary dense states. 
This advantage stems not only from the reduced classical description, requiring specification of only $\spar$ nonzero amplitudes, but also from the simpler quantum structure, which enables circuits with resource cost scaling in $\spar$ rather than the full Hilbert-space dimension $2^N$. 
For example, \emph{Qualtran} includes a dedicated subroutine for $\spar$-sparse states~\cite{qualtran, Plesch2011}, which prepares approximate states up to a chosen precision using $O(N\spar)$ gates and a number of ancillae that scales linearly with the system size $N$.
Other notable examples are \citet{gleinig2021efficient}, who proposed an algorithm that prepares $N$-qubit, $\spar$-sparse states using $O(N\spar)$ gates and no ancillary qubits.
More recently, \citet{luo2024circuit} considered regimes with no ancillae, with a sublinear number of ancillae compared to system size $N$, and with unbounded ancillae, showing that all cases achieve circuit sizes scaling as $O(N\spar)$ up to improvements of polylogarithmic factors.
Subsequent works explored sparsity combined with further structural constraints. 
In particular, \citet{de2022double} and \citet{farias2025quantum} studied states supported on low-Hamming-weight basis vectors.
Introducing the concept of \emph{double sparsity}, they refer to states that are sparse both in the number of amplitudes $\spar$ and in their Hamming weight $k$, achieving $O(\spar k)$ gate complexity without ancillae.

Together, these results show that exploiting sparsity and, when present, additional structure, enables efficient compilation well beyond the reach of general-purpose routines, while using a generic list (or dictionary) of amplitudes to describe the desired state.

\paragraph{Specialized routines}
Quantum states possessing combinatorial patterns or underlying symmetry can often be prepared much more efficiently than arbitrary states, as their structure restricts the relevant subspace of the Hilbert space. 
Researchers have leveraged these properties to design efficient circuits for physically relevant families, such as Motzkin~\cite{movassagh2016supercritical}, $W$~\cite{Dur_2000_Wstate}, and Dicke~\cite{Dicke_1954} states. 
Among the most relevant examples, \citet{gopalakrishnan2025pushdown} designed quantum circuits for Motzkin states leveraging concepts from push-down automata.
In our work, one of the families of states we highlight is Dicke-$k$ states, which are uniform superpositions of strings of fixed Hamming weight $k$.
In particular, we will compare our preparation method against \citet{bartschi2019deterministic}, who introduced a general construction using controlled two-qubit rotations with gate complexity \(O(Nk)\) and depth \(O(N)\) without ancillae.
\citet{bartschi2022short} later improved their result by exploiting all-to-all hardware connectivity and achieve logarithmic depth reductions, similar to our \emph{TreeRLSP} method. 
These works exemplify how exploiting structure of \emph{special families of states} enables scalable, hardware-aware compilation beyond the capabilities of general-purpose methods.
However, the \emph{limitation} of specialized routines, is that the user must \emph{explicitly call or implement} them, unless the compiler has \emph{effective user interfaces} and \emph{tests} that \emph{automatically recognize structure} in the user's description. 

\subsection{Regular Language States}
We now introduce the concept of RLS, which are the focus of our work.
Given an alphabet $\Sigma$, we denote by $\Sigma^*$ the set of all strings of arbitrary length formed by concatenating symbols in $\Sigma$. 
A \emph{language} $L$ is any subset of $\Sigma^*$. 
Languages are often defined via rules that the strings need to satisfy. 
In particular, RLs over $\Sigma$ are those defined by regex, which are recursively constructed from the following elements: the empty set $\emptyset$ and the empty string $\varepsilon$ (which differ in the fact that $L\emptyset = \emptyset L = \emptyset$ while $L\varepsilon = \varepsilon L = L$ for any language $L$), single letters $a \in \Sigma$, and the operations of concatenation ($R_1 R_2$), union ($R_1 \cup R_2$) and Kleene star ($R_1^* := \varepsilon \cup R_1 \cup R_1 R_1 \cup \dots$). 
For $N \in \mathbb{N}$, $\Sigma^N$ denotes the set of all words $w$ of length $N$, and any language $L \subseteq \Sigma^N$ is finite and hence regular~\cite{Hopcroft_2001_book_automata}. 

\begin{definition}[Regular language states]
\label{def: RLS}
    Any RL $L$ can be mapped to a family of quantum RLS, denoted $\{\ket{L_N}\}_{N \in \mathbb{N}}$, where each $\ket{L_N}$ is the uniform superposition over all words $w$ in $L$ of length $N$, with physical dimension $d = |\Sigma|$:
    \begin{equation}
        \ket{L_N} = \frac{1}{\sqrt{|L \cap \Sigma^N|}}\sum_{w \in L \cap \Sigma^N} \ket{w}. 
    \end{equation}
\end{definition}

Many of the physically relevant states we have mentioned previously are RLS, such as GHZ states~\cite{Greenberger_1989} ($0^* \cup 1^* \cup \dots$), the W-state~\cite{Dur_2000_Wstate} ($0^* 1 0^*$), the Dicke states~\cite{Dicke_1954} ($0^* (1 0^*)^k$), and any uniform superposition of finite sets of words.

\subsection{Finite-State Automata}
A finite-state automaton is a machine that can decide whether a given word belongs to a specific regular language.
RLs admit equivalent descriptions via regular expressions and finite automata, a key result in automata theory known as Kleene's theorem~\cite{Kleene_1956}.
This equivalence gives RLs an algorithmically tractable representation, enabling a concise description and efficient manipulations.
The following is a formal definition of a DFA.

\begin{definition}[Deterministic finite automaton]
    A \textit{DFA} is defined as $\mathcal{F} = \langle Q, \Sigma, \delta, r_0, F\rangle$, where $Q$ is the set of internal states, $r_0 \in Q$ is the initial state, $F \subseteq Q$ is the set of accepting states, and $\delta : Q \times \Sigma \to Q \cup \{\emptyset\}$ is the transition function. 
    A language $L$ is \textit{accepted} by automaton $\mathcal{F}$ if, for every word $w = x_1 x_2 \dots x_n \in L$, there exists a sequence of states $(q_1, \dots, q_n) \in Q^n$ such that $q_{i+1} = \delta(q_i, x_{i+1})$ for $i = 0, \dots, n-1$, and $q_n \in F$.
\end{definition}

DFAs are commonly represented as graphs where states are nodes, transitions are edges labeled by symbols, the initial state is marked by an incoming arrow, and accepting states are drawn with double circles.
To decide whether a word belongs to the language, the DFA reads one symbol at a time, follows the corresponding transitions (moving to a designated and implicit error state $\emptyset$ if no transition exists), and accepts if and only if the run ends in an accepting state.
\begin{tcolorbox}[
  enhanced,
  breakable,
  colback=gray!3,
  colframe=keynoteblue,
  title={Example 1: Hamming weight $k$ strings/Dicke-$k$ RLS},
  boxrule=0.6pt
]
Consider the RL of words with exactly $3$ ones (Hamming weight $3$): $0^*(10^*)^3$. 
This also corresponds to the Dicke-$3$ state.
Both can be efficiently represented by this DFA
\begin{equation}
    \mathcal{F}: \hspace{-2mm}\begin{tikzpicture}[scale=.2, baseline={([yshift=-3ex]current bounding box.center)}, thick]
        \tikzstyle{small state} = [state, minimum size=0pt, fill=purple,node distance=1.5cm,initial distance=1.5cm,initial text=$ $]
        \tikzset{->}
        \node[small state, initial] (q1) {\small $q_1$};
        \node[small state, right of=q1] (q2) {\small $q_2$};
        \node[small state, right of=q2] (q3) {\small $q_3$};
        \node[small state, accepting, right of=q3] (q4) {\small $q_4$};        
        \draw (q1) edge[loop above] node{\scriptsize $0$} (q1);
        \draw (q2) edge[loop above] node{\scriptsize $0$} (q2);
        \draw (q1) edge[above] node{\scriptsize $1$} (q2);
        \draw (q2) edge[above] node{\scriptsize $1$} (q3);
        \draw (q3) edge[loop above] node{\scriptsize $0$} (q3);
        \draw (q3) edge[above] node{\scriptsize $1$} (q4);
        \draw (q4) edge[loop above] node{\scriptsize $0$} (q4);
    \end{tikzpicture} \ .
\end{equation}
\end{tcolorbox}

Languages restricted to strings of fixed length $N$ can be recognized by directed and acyclic graphs (DAG), very much like a \emph{trie} data structure store words.
We can describe such graphs as having $N+1$ layers, with $N$ sets of edges connecting one layer to the next (one set per each symbol position).  
This gives rise to a specific class of DFAs that we leverage in some parts of our framework, and we call them DAG-DFAs.

\begin{definition}[Directed acyclic DFA]
\label{def: dag-dfa}
    A \textit{DAG-DFA} is $\mathcal{F}_{\mathrm{DAG}} = \langle \{Q^{(i)}\}_{i=0}^N, \Sigma, \{\delta^{(i)}\}_{i=1}^N \rangle$, where $Q^{(0)} = \{q_0\}$ and $Q^{(N)} = \{q_N\}$ are the unique and the accepting states, respectively, and each $\delta^{(i)}: Q^{(i-1)} \times \Sigma \to Q^{(i)} \cup \{\emptyset\}$ defines the transitions between consecutive layers. 
    A language $L \subseteq \Sigma^N$ is accepted by $\mathcal{F}_{\mathrm{DAG}}$ if, for every word $w = x_1 \dots x_N \in L$, there exists a sequence of states $(q_0, \dots, q_N)$ such that $q_i \in Q^{(i)}$ and $q_i = \delta^{(i)}(q_{i-1}, x_i)$, for each $i = 1, \dots, N$. 
\end{definition}

Generalizations of DAG-DFAs to finite languages of varying length are also known as deterministic acyclic finite state automaton (DAFSA) or directed acyclic word graph (DAWG). 
Restricting to words of exactly $N$ symbols, ensures that (1) there cannot be transitions between states of non-consecutive layers, and (2) there is a unique final state, at the $N+1$ layer.

\subsection{Matrix Product States} \label{sec:background_mps}

\paragraph{Tensor networks} TNs are powerful representations of quantum many-body states, expressing highly entangled wavefunctions through networks of low-rank tensors that encode correlations locally. An $N$-qudit state $\ket{\psi}$ is a rank-$N$ tensor requiring $O(d^N)$ amplitudes, but TNs factorize it into smaller tensors with internal (bond) dimensions bounded by some parameter $D$. 
When $D$ remains finite, the number of parameters scales as $O(N d D^2)$, yielding an exponentially more compact description for states with limited entanglement.

\paragraph{TN Diagrammatic notation.}
TNs are conveniently expressed through a \emph{diagrammatic language}, where tensors are represented as nodes and their indices as connecting lines. Each leg corresponds to an index, and connecting two legs denotes contraction. A scalar, vector, and matrix are tensors with zero, one, and two legs, respectively; higher-order tensors have multiple legs. For example, a rank-3 tensor $A^i_{\alpha\beta}$ can be viewed as a set of matrices $\{A^i\}_{i=0}^{d-1}$ with physical index $i$ and bond indices $\alpha,\beta$.

\paragraph{Matrix product states.} By contracting such local tensors along their bonds, one obtains a chain-like structure known as a MPS. MPS have been extensively developed in the context of quantum many-body physics \cite{orus_2014_introTNs, Cirac_review_2021}, where they underpin paradigmatic algorithms such as the density matrix renormalization group (DMRG) \cite{Schollwock_2011}, as well as key theoretical results like the classification of gapped phases of matter in one-dimensional systems \cite{Chen_2011_symm, Schuch_2011_symm}. They are also widely used by the data modeling community, where they are known as \textit{tensor trains} \cite{oseledets_2011_tensor-trains}. 
Throughout this work, we consider \emph{open boundary conditions} (OBC), where the first and last tensors have a single virtual index; i.e., $D_0=D_N=1$ in the following definition.

\begin{definition}[OBC Matrix product states]
\label{def: MPS}
    An MPS on $N$ sites is defined by a set of site-dependent matrices $\{A_{[n]}^i\}_{i=0}^{d-1}$, where each $A_{[n]}^i \in \mathbb{C}^{D_{n-1} \times D_{n}}$ for $n \in \{1, \dots, N\}$ and $D_0 = D_N = 1$.
    These give rise to the quantum state
    \begin{align*}
        \ket{\psi} &= \sum_{i_1 \dots i_N=0}^{d-1} A_{[1]}^{i_1} A_{[2]}^{i_2} \dots A_{[N]}^{i_N}  \ket{i_1 i_2 \dots i_N}
        \equiv 
        \begin{tikzpicture}[scale=.55, baseline={([yshift=-1ex]current bounding box.center)}, thick]
            \begin{scope}[shift={(0,0)}]
                \draw[shift={(0,0)},dotted] (0.5,0) -- (4,0);
                \LeftMPSTensor{(0,0)}{$A_{[1]}$}{purple}
                \MPSTensor{1.5,0}{$A_{[2]}$}{purple}
                \RightMPSTensor{4.5,0}{$A_{[N]}$}{purple}
            \end{scope}
        \end{tikzpicture}
        ,
    \end{align*}
    where $d$ is the \textit{physical dimension}, and $D_n$ are the \textit{bond dimensions} of the internal matrices. 
    We call $D = \max_{n} (D_n)$ the \emph{bond dimension} of the MPS.
\end{definition}

There can be equivalent MPSs, with different bond dimensions, that represent the same state.
In general, the optimal (minimal) bond dimension $D_n$ of a tensor at site $n$, required to faithfully represent a given state, is equal to the Schmidt rank across the corresponding bipartition of the system. 
Thus, the \emph{optimal} bond dimension of an MPS is given by the \emph{maximal Schmidt rank}. 

\begin{definition}[Schmidt rank]
\label{def:schmidt-rank-bipartition}
Let $\ket{\psi}\in(\mathbb{C}^d)^{\otimes N}$ and fix a cut $n\in\{1,\dots,N-1\}$ inducing the bipartition $L=\{1,\dots,n\}$ and $R=\{n{+}1,\dots,N\}$. 
The \emph{Schmidt rank across the bipartition $L|R$} is $\chi_n=\operatorname{rank}(\Psi_{(n)})$, where $\Psi_{(n)}$ is the $d^n\times d^{N-n}$ matrix obtained by reshaping the amplitudes of $\ket{\psi}$ with respect to the cut $n$.
We denote the \emph{maximal Schmidt rank} of a $\ket{\psi}$ as $\chi = \max_{n}(\chi_n)$.
\end{definition}

The Schmidt rank certifies bipartite entanglement for pure states: across a cut $L|R$, $\chi_n=1$ if and only if $\ket{\psi}$ is a product state (no entanglement), while $\chi_n > 1$ implies entanglement across that cut.
Moreover, for the reduced state $\rho_L$, any Rényi entropy (which, for pure states, quantifies bipartite entanglement) satisfies $S_\alpha(\rho_L) \leq \log(\chi_n)$, with equality when $\alpha=0$ or when the nonzero Schmidt coefficients are uniform.
The resources of our circuits will scale with $N$ and this \emph{intrinsic} quantity.

\paragraph{Connection to formal languages.} MPS admit a natural formulation in terms of finite automata~\cite{crosswhite_2008, florido2024regular}, establishing a correspondence between TNs and RLs. 
This duality has been used to accelerate many-body simulations~\cite{crosswhite_2008} and study the expressiveness of TNs~\cite{florido2024regular}. 
Here, we exploit it to efficiently compile regular descriptions into quantum circuits.

\section{RLS Compilation Pipeline} \label{sec:methodology}
Bridging and extending independent lines of work on RLs and MPS, our methodology is designed in a modular fashion, consisting of a sequence of well-defined steps that can be independently optimized. 
The pipeline is summarized in Fig. \ref{fig:pipeline} and proceeds through the following stages:
\begin{itemize}
    \item[---] \textbf{User input to automaton:} the input RLS is specified either as a regex, a DFA, or a set of strings, and converted into a finite automaton. 
    \item[---] \textbf{Complement of the language (optional):} if requested by the user, the automaton is complemented by exchanging accepting and non-accepting states.
    \item[---] \textbf{Automaton to MPS:} the DFA is minimized and mapped to an MPS.
    \item[---] \textbf{MPS to Isometries:} the MPS is decomposed into isometries acting on a subset of qubits and describing the quantum circuit.
    \item[---] \textbf{Isometries to Quantum Circuit}: the isometries are embedded into unitaries and transpiled into a quantum circuit, using gates from a universal set and accounting for connectivity.
\end{itemize}
The following sections describe each step in detail. 
We describe our methodology tailoring it to states of qubits, \emph{i.e.}, restricting the alphabet to $\Sigma = \{0,1\}$. 
However, the same methods easily extend to states of qudits and larger alphabets.

\subsection{User Input to Automata}
\label{sec:user-to-automata}
A user can specify a state in three possible ways: by providing an explicit set of strings, as a regular expression, or directly as a DFA accepting the language. 
The first step of the compilation pipeline translates the input into a DFA, the first tool of our IR. 
We discuss the three input types.

\subsubsection{Set of strings} \label{subsubsec:explicit_list_strings}
When the user describes the state through a finite set (or list) of $\spar$ distinct strings $L = \{w_1, w_2, \dots, w_\spar\} \subset \Sigma^N$, we can construct a DAG-DFA that accepts them: each string defines a path in the automaton from the initial state to the unique accepting state, with shared prefixes leading to state merging to preserve determinism. 
This procedure requires $O(\spar N)$ operations.

\subsubsection{Regular expression}
When the user describes the state via a regular expression (and a system size $N$), we need to compile the regex into a DFA.
This can be done via standard techniques, for instance by building a non-deterministic automaton (NFA)~\cite{Thompson_1968_regex-to-nfa, Glushkov_1961_regex-to-NFA} and making it deterministic~\cite{rabin1959finite, Hopcroft_2001_book_automata, alfred2007compilers}.
While specifying a regex can be substantially more concise than enumerating the entire state support, making an NFA deterministic can incur costs that scale exponentially with the NFA size, and more informally with the regex description size.
We do not expect this blow-up for regular expressions describing common physical states, but this might happen for particular or adversarial regexes.
In our compile-time analysis (\Cref{sec: costs and guarantees}), we will stress this possibility and begin our analysis from the DFA representation, thereby avoiding the risk of this potential exponential overhead altogether. 

\subsubsection{DFA} In certain cases, a user might want to specify the state directly as a DFA.
This feature would provide users with a \emph{workaround enabling efficient descriptions beyond strictly RLs}.
For instance, the Dyck language of well parenthesized strings is a paramount example of context-free, non-regular language, and corresponds to quantum Motzkin states~\cite{movassagh2016supercritical, gopalakrishnan2025pushdown}.
However, any \emph{finite} language is regular, so its restriction to length $N$-strings can be captured by a DFA of $O(N)$ states, one per each parenthesis left to close.

\begin{definition}[$D_{\mathrm{init}}$ and $D_{\mathrm{minDFA}}$]
\label{def: dinit dminDFA}
    We call $D_{\mathrm{init}}$ the number of internal states of a DFA, or the maximum number of states in a layer $\max_{i \in \{0,\dots,N\}}|Q^{(i)}|$ of a DAG-DFA.
    For a minimal and equivalent automaton, we call this parameter $D_{\mathrm{minDFA}}$.
    In general, it holds that $D_{\mathrm{minDFA}} \leq D_{\mathrm{init}}$.
\end{definition}

\paragraph{DFA minimization}
Each RL admits a canonical form as a unique minimal DFA (in the number of states), up to relabeling of the internal states~\cite{Hopcroft_2001_book_automata}.
A DFA can be minimized in time $O(D_{\mathrm{init}} \log(D_{\mathrm{init}}))$ using Hopcroft's algorithm~\cite{Hopcroft_1971, Hopcroft_2001_book_automata}, while a DAG-DFA in $O(ND_{\mathrm{init}})$~\cite{revuz_1992_min-acylic-linear-time}, preserving the DAG-DFA structure.
The minimal DFA canonicalizes the input and serves as the foundation for the rest of the pipeline, offloading computational effort from the MPSs canonization.

\subsection{Complement Regular Language States} \label{sec:complement}
In some applications, it is useful to prepare a complement RLS (\emph{e.g.}, see \cite[Definition 9]{bellante2025quantum} and \cite{benedetti2025complement}). 
\begin{definition}[Complement RLS] \label{def:complementRLS}
    The family of quantum \textit{complement RLSs} associated to a RL $L$ is the RLS associated to its complement RL, $\bar{L} := \Sigma^* \setminus L$. 
\end{definition}
Importantly, if an RLS has a concise description, its complement does too. 
In our framework, the user can request the synthesis of the complement using the base RLS description and setting a flag that our pipeline efficiently adapts. 
Current compilers (\emph{Qiskit} and \emph{Qualtran}) do not offer this functionality. 
Furthermore, if the base RLS is sparse, its complement will have a dense support, making the compilation using \emph{set of strings} or \emph{dictionary} descriptions intractable for these compilers.
Instead, we create a DFA for the base RLS, as explained in the previous section, and if the complement flag is set, the DFA IR is modified as follows:
\begin{enumerate}
    \item If the automaton is a \emph{DAG-DFA}, it is convenient to restrict the complement to length-$N$ strings (i.e.,\ $\bar{L} \cap \Sigma^N$) and keep a DAG-DFA structure, as this will help in building a better MPS later on.
    This complement DAG-DFA can be computed efficiently, in $O(ND_{\mathrm{minDFA}})$ steps. 
    After locating the first layer with undefined transitions (i.e., $\delta^{(i)}(q_{i-1}, x_i) = \emptyset$), we add a dedicated \emph{sink state} to each subsequent layer. 
    All previously undefined transitions now redirect into the next layer's sink, and each layer's sink transitions to the following layer's sink, for all possible symbols.
    The previous accepting state is removed, with its incoming transitions, and the final layer's sink becomes the unique accepting state. 
    The resulting DAG-DFA recognizes $\bar{L} \cap \Sigma^N$.
    We minimize it again, and, by construction, the final $D_{\mathrm{minDFA}}$ increases by at most $1$.

    \item If the automaton is a generic \emph{DFA}, the complement is obtained by adding an explicit \emph{sink}, redirecting the undefined transitions there, and making non-accepting states accepting and vice versa. 
    A subsequent minimization ensures the result is the unique minimal DFA for $\bar{L}$. 
    Even in this case, the final $D_{\mathrm{minDFA}}$ increases by at most $1$.
\end{enumerate}

\begin{tcolorbox}[
  enhanced,
  breakable,
  colback=gray!3,
  colframe=keynoteblue,
  title={Example 2: $W$-State/one-hot encoded strings DFAs},
  boxrule=0.6pt
]
    Let the user specify the RL $L = 0^* 1 0^*$, whose quantum RLS corresponds to the W-state, and its complement $\overline{L}$. 
    The minimal DFAs accepting these languages are, respectively,
    \begin{equation}
    \label{eq: W-state DFA F1}
        \mathcal{F}_1: \hspace{-2mm}\begin{tikzpicture}[scale=.2, baseline={([yshift=-3ex]current bounding box.center)}, thick]
            \tikzstyle{small state} = [state, minimum size=0pt, fill=purple,node distance=1.5cm,initial distance=1.5cm,initial text=$ $]
            \tikzset{->}
            \node[small state, initial] (q1) {\small $q_1$};
            \node[small state, accepting, right of=q1] (q2) {\small $q_2$};
            \draw (q1) edge[loop above] node{\scriptsize $0$} (q1)
            (q2) edge[loop above] node{\scriptsize $0$} (q2)
            (q1) edge[above] node{\scriptsize $1$} (q2);
        \end{tikzpicture} 
        \qquad     
        \bar{\mathcal{F}}_1: \hspace{-2mm}\begin{tikzpicture}[scale=.2, baseline={([yshift=-3ex]current bounding box.center)}, thick]
        \tikzstyle{small state} = [state, minimum size=0pt, fill=purple,node distance=1.5cm,initial distance=1.5cm,initial text=$ $]
        \tikzset{->}
        \node[small state, initial, accepting] (q1) {\small $q_1$};
        \node[small state, right of=q1] (q2) {\small $q_2$};
        \node[small state, accepting, right of=q2] (qs) {\small $q_s$};
        \draw (q1) edge[loop above] node{\scriptsize $0$} (q1)
              (q1) edge[above] node{\scriptsize $1$} (q2)
              (q2) edge[loop above] node{\scriptsize $0$} (q2)
              (q2) edge[above] node{\scriptsize $1$} (qs)
              (qs) edge[loop above] node{\scriptsize $0,1$} (qs);
    \end{tikzpicture} \ .
    \end{equation}

    If the user provides a set of strings for the length-$3$ W-state and asks for the complement too, our pipeline produces the following $4$-layers minimal DAG-DFAs:
    \begin{equation} 
    \label{eq: W-state DFA F2}
    \begin{tikzpicture}[scale=.25, baseline={([yshift=-3ex]current bounding box.center)}, thick]
    \tikzstyle{small state} = [state, minimum size=0pt, fill=purple,node distance=1.3cm,initial distance=1.3cm,initial text=$\mathcal{F}_2:$,inner sep=0.6pt]
    \tikzset{->}
    \node[small state, initial] (q1_0) {\small $q_1^{(0)}$};
    \node[small state, right of=q1_0] (q1_1) {\small $q_1^{(1)}$};
    \node[small state, right of=q1_1] (q1_2) {\small $q_1^{(2)}$};

    \node[small state, below of=q1_1] (q2_1) {\small $q_2^{(1)}$};
    \node[small state, right of=q2_1] (q2_2) {\small $q_2^{(2)}$};
    
    \node[small state, accepting, right of=q2_2] (q2_3) {\small $q_1^{(3)}$};
    
    \draw[->] (q1_0) -- (q1_1) node[midway, above] {\scriptsize $0$};
    \draw[->] (q1_0) -- (q2_1) node[midway, right] {\scriptsize $1$};
    
    \draw[->] (q1_1) -- (q1_2) node[midway, above] {\scriptsize $0$};
    \draw[->] (q1_1) -- (q2_2) node[midway, right] {\scriptsize $1$};
    
    \draw[->] (q1_2) -- (q2_3) node[midway, right] {\scriptsize $1$};
        
    \draw[->] (q2_1) -- (q2_2) node[midway, above] {\scriptsize $0$};
    \draw[->] (q2_2) -- (q2_3) node[midway, above] {\scriptsize $0$};
    \end{tikzpicture}
    \quad     
    \begin{tikzpicture}[scale=.25, baseline={([yshift=-3ex]current bounding box.center)}, thick]
    \tikzstyle{small state} = [state, minimum size=0pt, fill=purple,node distance=1.3cm,initial distance=1.3cm,initial text=$\bar{\mathcal{F}}_2:$,inner sep=0.6pt]
    \tikzset{->}
    \node[small state, initial] (q1_0) {\small $q_1^{(0)}$};
    \node[small state, right of=q1_0] (q1_1) {\small $q_1^{(1)}$};
    \node[small state, right of=q1_1] (q1_2) {\small $q_1^{(2)}$};

    \node[small state, below of=q1_1] (q2_1) {\small $q_2^{(1)}$};
    \node[small state, right of=q2_1] (q2_2) {\small $q_2^{(2)}$};
    
    \node[small state, below of=q2_2] (q3_2) {\small $q_s^{(2)}$};
    
    \node[small state, accepting, right of=q3_2] (q2_s) {\small $q_s^{(3)}$};
    \draw[->] (q1_0) -- (q1_1) node[midway, above] {\scriptsize $0$};
    \draw[->] (q1_0) -- (q2_1) node[midway, right] {\scriptsize $1$};
    
    \draw[->] (q1_1) -- (q1_2) node[midway, above] {\scriptsize $0$};
    \draw[->] (q1_1) -- (q2_2) node[midway, right] {\scriptsize $1$};
    
    \draw[->] (q1_2) -- (q2_s) node[midway, right] {\scriptsize $0$};
        
    \draw[->] (q2_1) -- (q2_2) node[midway, above] {\scriptsize $0$};
    \draw[->] (q2_2) -- (q2_s) node[midway, above] {\scriptsize $1$};

    \draw[->] (q2_1) -- (q3_2) node[midway, above] {\scriptsize $1$};
    \draw[->] (q3_2) -- (q2_s) node[midway, above] {\scriptsize $0,1$};
    \end{tikzpicture}
    \end{equation}
    Note that $D_{\mathrm{minDFA}}^{\mathcal{F}_1}=D_{\mathrm{minDFA}}^{\mathcal{F}_2}=2$ and $D_{\mathrm{minDFA}}^{\bar{\mathcal{F}}_1}=D_{\mathrm{minDFA}}^{\bar{\mathcal{F}}_2}=3$.
    Our DAG-DFA construction allows recovering a small $D_{\mathrm{minDFA}}$ even when starting from a set of strings description rather than from a compact and more general regex. 
\end{tcolorbox}

\subsection{Automata to MPS} \label{sec:MPS}
The next step in our compilation pipeline requires mapping the minimal DFA to an MPS. 
Intuitively, the sequential structure of a DFA provides a natural backbone for an MPS representation. 
DFA states correspond to bond indices, and input symbols from the alphabet $\Sigma$ correspond to the physical indices. 
In a way, the MPS encodes adjacency matrices of the DFA, with the left boundary encoding the initial state and the right boundary projecting onto accepting states. 

Below we describe two practical constructions, starting from a minimal DFA or DAG-DFA.
In both cases, the MPS will have bond dimension $D = D_{\mathrm{minDFA}}$ (\cref{def: dinit dminDFA}).
\subsubsection{Uniform-bulk construction (DFA)}

This correspondence gives rise to an MPS whose tensors are all the same (i.e., with \textit{uniform bulk}) except for the two boundary vectors.

\begin{lemma}[DFA to uniform-bulk MPS \cite{florido2024regular}] \label{lemma:dfa_to_uniform_MPS}
    Given a DFA $\mathcal{F} = \langle Q, \Sigma, \delta, I, F \rangle$ accepting the RL $L$, an MPS description of the family of quantum RLSs $\{ \ket{L_N} \}_{N \in \mathbb{N}}$ associated to it can be obtained so that
    $\ket{L_N} :=
        \begin{tikzpicture}[scale=.45, baseline={([yshift=-1ex]current bounding box.center)}, thick]
            \FullMPS{0,0}{$A$}{purple}
            \draw[fill=amaranth] (-1.4,0) circle (0.4);
            \node at (-1.4,0) {\scriptsize $v_l$};
            \draw[fill=amaranth] (6.4,0) circle (0.4);
            \node at (6.4,0) {\scriptsize $v_r$};
            \node at (3.2,0.5) {\scriptsize $N$ times};
        \end{tikzpicture}$,
    with boundary vectors and bulk tensors defined by
    \begin{equation} \label{eq:associate_MPS_to_NFA1}
     \begin{tikzpicture}[scale=.45, baseline={([yshift=-0.5ex]current bounding box.center)}, thick]
            \draw (0,0) -- (0.8,0);
            \draw[fill=amaranth] (0,0) circle (0.4);
            \node at (0,0) {\scriptsize $v_l$};
        \end{tikzpicture} := \sum_{i \in I} \bra{i}, 
        \qquad 
        \begin{tikzpicture}[scale=.45, baseline={([yshift=-1ex]current bounding box.center)}, thick]
            \MPSTensor{0,0}{$A$}{purple}
            \node at (-1.2,0) {\scriptsize $i$};
            \node at (1.2,0) {\scriptsize $j$};
            \node at (0,1.25) {\scriptsize $x$};
        \end{tikzpicture} :=
        \begin{cases}
            1 & \text{if } j = \delta(i,x), \\
            0 & \text{otherwise},
        \end{cases}
        \qquad 
        \begin{tikzpicture}[scale=.45, baseline={([yshift=-0.5ex]current bounding box.center)}, thick]
            \draw (0,0) -- (-0.8,0);
            \draw[fill=amaranth] (0,0) circle (0.4);
            \node at (0,0) {\scriptsize $v_r$};
        \end{tikzpicture} := \sum_{f \in F} \ket{f}.
    \end{equation}
    This MPS has bond dimension $D = |Q|$.
\end{lemma}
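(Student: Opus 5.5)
The plan is to contract the MPS explicitly and read off the coefficient of each basis string $\ket{w}$, $w=x_1\cdots x_N\in\Sigma^N$. By \cref{def: MPS}, this coefficient is the scalar
\begin{equation*}
c_w \;=\; v_l\, A^{x_1} A^{x_2}\cdots A^{x_N}\, v_r ,
\end{equation*}
where $A^{x}$ is the $|Q|\times|Q|$ matrix with entries $(A^x)_{ij}=1$ if $j=\delta(i,x)$ and $0$ otherwise. Equivalently, $A^x$ is the adjacency matrix of the $x$-labelled edges of the automaton graph. Transitions into the implicit error state $\emptyset$ simply produce a zero row. The bond index therefore ranges over $Q$, which immediately gives $D=|Q|$.

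The key step is a short induction on the word length. Expanding the matrix product gives
\begin{equation*}
c_w=\sum_{q_0,\dots,q_N} [q_0\in I]\prod_{k=1}^N [q_k=\delta(q_{k-1},x_k)]\,[q_N\in F].
\end{equation*}
This counts the runs of the automaton on $w$ that start in an initial state and end in an accepting one. More cleanly, I will show by induction on $n$ that the row vector $e_i^{\top}A^{x_1}\cdots A^{x_n}$ equals $e_{\delta^*(i,x_1\cdots x_n)}^{\top}$ when the extended transition $\delta^*$ is defined, and equals $0$ otherwise. The base case is $n=0$. The inductive step uses that each row of $A^x$ has at most one nonzero entry, located at column $\delta(i,x)$, and this is exactly determinism.

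It follows that $c_w=\sum_{i\in I}[\delta^*(i,w)\in F]$. Here I must handle the notational point that the lemma writes the initial states as a set $I$. For a DFA, $I=\{r_0\}$, so $c_w\in\{0,1\}$ and $c_w=1$ exactly when $w\in L$. Hence the contracted vector is $\sum_{w\in L\cap\Sigma^N}\ket{w}$. This is the uniform superposition, and dividing by $\sqrt{|L\cap\Sigma^N|}$, its norm, gives $\ket{L_N}$ as in \cref{def: RLS}. The same tensors $v_l,A,v_r$ work for every $N$, so the whole family $\{\ket{L_N}\}_N$ is produced by a uniform bulk.

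I expect no step to be a genuine obstacle. The only care needed is bookkeeping. First, normalization is understood up to the scalar $1/\sqrt{|L\cap\Sigma^N|}$, and it is defined only when $L\cap\Sigma^N\neq\emptyset$. Second, determinism is essential. If several initial states were allowed, or with an NFA, $c_w$ would count accepting runs and the amplitudes could exceed $1$, which would make the superposition non-uniform. I will remark on this explicitly to justify why the pipeline determinizes and minimizes before building the MPS.
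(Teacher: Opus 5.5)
Your proof is correct and takes essentially the same route as the paper. The paper does not prove this lemma itself (it cites \cite{florido2024regular}); it only remarks that each $A^x$ is the adjacency matrix of the $x$-labelled edges, and for the DAG-DFA analogue in \cref{lemma:dag-dfa-to-mps} it argues that the matrix product is nonzero exactly on accepting runs. Your induction on $e_i^{\top}A^{x_1}\cdots A^{x_n}$ makes explicit what the paper leaves implicit: determinism (at most one nonzero entry per row, and $I=\{r_0\}$) is what forces the amplitudes to be exactly $0$ or $1$.
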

In words, each tensor $A^{x}_{i j} \equiv \{A^x\}_{x \in \Sigma}$ in the bulk encodes the whole adjacency matrix of the DFA graph, with each matrix $A^x$ encoding the edges labeled by the symbol $x$.

\subsubsection{Non-uniform construction (DAG-DFA)}
Using the construction of \cref{lemma:dfa_to_uniform_MPS}~\cite{florido2024regular}, the resulting MPS bond dimension would scale with the total number of internal states.
However, the layered structure of a DAG-DFA can be leveraged to obtain a different, non-uniform MPS construction \emph{with bond dimension equal to the maximum number of nodes in a layer}, maintaining $D=D_{\mathrm{minDFA}}$.

\begin{lemma}[DAG-DFA to non-uniform MPS] \label{lemma:dag-dfa-to-mps}
    Given a DAG-DFA $\mathcal{F} = \langle \{Q^{(i)}\}_{i=0}^{N}, \Sigma, \{\delta^{(i)}\}_{i=0}^{N} \rangle$ accepting the finite language $L \subseteq \Sigma^N$, an MPS description of the associated family of quantum RLS $\{ \ket{L_N} \}_{N \in \mathbb{N}}$ can be obtained so that $\ket{L_N} :=
        \begin{tikzpicture}[scale=.45, baseline={([yshift=-1ex]current bounding box.center)}, thick]
            \LeftMPSTensorRect{0,0}{$A_{[1]}$}{purple}{0.1}
            \MPSTensorRect{1.7,0}{$A_{[2]}$}{purple}{0.1}
            \RightMPSTensorRect{5.4,0}{$A_{[N]}$}{purple}{0.125}
            \node at (3.6,0) {$\dots$};
        \end{tikzpicture},$
    with tensors defined by
    \begin{equation} \label{eq:mps-dag-dfa-correspondence}
        \begin{tikzpicture}[scale=.45, baseline={([yshift=-1ex]current bounding box.center)}, thick]
            \MPSTensorRect{0,0}{$A_{[n]}$}{purple}{0.125}
            \node at (-1.4,0) {\scriptsize $i$};
            \node at (1.4,0) {\scriptsize $j$};
            \node at (0,1.25) {\scriptsize $x$};
        \end{tikzpicture} = 
        \begin{cases}
            1 & \text{if } j = \delta^{(n)}(i, x), \\
            0 & \text{otherwise.}
        \end{cases}
    \end{equation}
    Each $A_{[n]}^x$ is $D_{n-1} \times D_{n}$, with $D_n = |Q^{(n)}|$.
    This MPS has bond dimension $D = \max_n (|Q^{(n)}|)$.
\end{lemma}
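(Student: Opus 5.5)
Since the MPS of \cref{def: MPS} is a sum of products of matrices, the plan is to compute its amplitude on each basis string and compare it with $\ket{L_N}$ up to the normalization $1/\sqrt{|L\cap\Sigma^N|}$, which we fix at the end. Fix $w=x_1\cdots x_N\in\Sigma^N$. With $D_0=|Q^{(0)}|=1$ and $D_N=|Q^{(N)}|=1$, the product $A_{[1]}^{x_1}A_{[2]}^{x_2}\cdots A_{[N]}^{x_N}$ is a $1\times 1$ matrix. Indexing bond $n$ by the states of layer $Q^{(n)}$ and expanding the product, we obtain
\[
A_{[1]}^{x_1}\cdots A_{[N]}^{x_N}=\sum_{q_1\in Q^{(1)},\dots,q_{N-1}\in Q^{(N-1)}}\ \prod_{n=1}^{N}\big(A_{[n]}^{x_n}\big)_{q_{n-1}q_n},
\]
where $q_0$ and $q_N$ are the unique initial and accepting states. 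By \eqref{eq:mps-dag-dfa-correspondence}, each factor is $1$ exactly when $q_n=\delta^{(n)}(q_{n-1},x_n)$ and $0$ otherwise. So each term of the sum is the indicator that $(q_0,\dots,q_N)$ is a valid run of the DAG-DFA on $w$ ending in $q_N$.

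The key step is determinism. Because each $\delta^{(n)}$ is a partial function, at most one sequence $(q_1,\dots,q_{N-1})$ gives a nonzero term: $q_1$ is forced by $q_0$ and $x_1$, then $q_2$ by $q_1$ and $x_2$, and so on, by induction on $n$. If some transition is undefined ($\emptyset$), every term vanishes. So the amplitude of $w$ is exactly $1$ when the forced run exists and reaches $q_N$, and $0$ otherwise. By \cref{def: dag-dfa} this means the amplitude is $1$ if $w\in L$ and $0$ otherwise. The unnormalized MPS is therefore $\sum_{w\in L}\ket{w}$, and dividing by $\sqrt{|L\cap\Sigma^N|}$, which absorbs into any one tensor, gives $\ket{L_N}$ as in \cref{def: RLS}.

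Dimensions are read directly off the construction. Bond $n$ ranges over $Q^{(n)}$, so $A_{[n]}^x\in\mathbb{C}^{D_{n-1}\times D_n}$ with $D_n=|Q^{(n)}|$. Hence $D=\max_n|Q^{(n)}|$, and this equals $D_{\mathrm{minDFA}}$ when the input DAG-DFA is the minimized one.

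The main obstacle is bookkeeping rather than a conceptual difficulty. We must make sure the boundaries need no extra vectors: the singleton layers $Q^{(0)}$ and $Q^{(N)}$ play the roles of $v_l$ and $v_r$ in \cref{lemma:dfa_to_uniform_MPS}. We also need the convention that undefined transitions give zero entries. The uniqueness of the run is essential: a nondeterministic automaton would produce amplitudes counting accepting paths, not a uniform superposition.
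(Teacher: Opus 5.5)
Your proposal is correct and follows the same route as the paper. Like the paper, you read each $A_{[n]}^x$ as the $0/1$ adjacency matrix of the symbol-$x$ edges between layers $n-1$ and $n$, so the amplitude of $w$ is the indicator of an accepting run, and the bond dimensions come directly from the layer sizes $|Q^{(n)}|$. Your sum-over-paths expansion, the use of determinism to rule out amplitudes larger than $1$, and the normalization by $\sqrt{|L\cap\Sigma^N|}$ make explicit details that the paper's short proof leaves implicit.
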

\begin{proof}
    The correctness follows from the definitions of RLS~(\ref{def: RLS}), MPS~(\ref{def: MPS}), and DAG-DFA~(\ref{def: dag-dfa}). 
    In particular, observe that for any word $w = x_1 \dots x_N$, the DFA has an accepting path $(q_0, \dots, q_N)$ if and only if $A_{[1]}^{x_1}A_{[2]}^{x_2}\dots A_{[N]}^{x_N} = 1$, or equivalently if $A_{[n]}^{x_n}=1, \forall n \in \{1,\dots,N\}$. 
    Indeed, by \eqref{eq:mps-dag-dfa-correspondence}, each $A_{[n]}^{x_n}$ can be seen as the indicator function $\Sigma \times Q^{(n-1)} \times Q^{(n)} \to \{0,1\}$ of the graph of $\delta^{(n)}$.
    Or, equivalently, $A^{x}_{[n]}$ encodes the adjacency matrix of the bipartite graph between layers $n-1$ and $n$ of the DAG-DFA, restricted to edges for the symbol $x$.
    This implies $D_n = |Q^{(n)}|$ too.
\end{proof}

\begin{tcolorbox}[
  enhanced,
  breakable,
  colback=gray!3,
  colframe=keynoteblue,
  title={Example 3: $W$-State/one-hot encoded strings MPSs},
  boxrule=0.6pt,
]
    Applying the construction of \cref{lemma:dfa_to_uniform_MPS} to the DFA $\mathcal{F}_1$ \eqref{eq: W-state DFA F1} yields the following MPS:
\begin{equation}
\begin{cases}
    \begin{tikzpicture}[scale=.45, baseline={([yshift=-1ex]current bounding box.center)}, thick]
        \MPSTensor{0,0}{$A$}{purple}
        \node at (0,1.3) {\scriptsize $0$};
    \end{tikzpicture}
    = {\begin{pmatrix}
        1 & 0 \\ 0 & 1
    \end{pmatrix}}
    \ , \quad &
    \begin{tikzpicture}[scale=.45, baseline={([yshift=-1ex]current bounding box.center)}, thick]
        \MPSTensor{0,0}{$A$}{purple}
        \node at (0,1.3) {\scriptsize $1$};
    \end{tikzpicture}
    = {\begin{pmatrix}
        0 & 1 \\ 0 & 0
    \end{pmatrix}} \ , \\
    \begin{tikzpicture}[scale=.45, baseline={([yshift=-0.5ex]current bounding box.center)}, thick]
        \draw (0,0) -- (0.8,0);
        \draw[fill=amaranth] (0,0) circle (0.4);
        \node at (0,0) {\scriptsize $v_l$};
    \end{tikzpicture} = (1, 0) \ , 
    \quad &
    \begin{tikzpicture}[scale=.45, baseline={([yshift=-0.5ex]current bounding box.center)}, thick]
        \draw (0,0) -- (-0.8,0);
        \draw[fill=amaranth] (0,0) circle (0.4);
        \node at (0,0) {\scriptsize $v_r$};
    \end{tikzpicture} = (0, 1)^T.
\end{cases}
\end{equation}
Using \cref{lemma:dag-dfa-to-mps} on the DAG-DFA $\mathcal{F}_2$ \eqref{eq: W-state DFA F2}, produces an equivalent MPS, with the same bond dimension $D=2$.
Starting from $\bar{\mathcal{F}}_1$ and $\bar{\mathcal{F}}_2$ one can obtain a MPS for the complement.
\end{tcolorbox}

\subsection{MPS to Isometries} 
\label{sec:MPS-to-isometries}
This step establishes the beginning of the backend, as the remaining effort is to synthesize the MPS IR into a quantum circuit.
We outline two ways, a \emph{sequential} generation scheme~\cite{schon_2005_sequentialMPS} called \SeqRLSP, which is suited for LNN connectivity, and a \emph{tree}/parallel scheme~\cite{Cirac_2009_trees1, malz2024preparation} called \TreeRLSP, which works best on all-to-all devices.
Both methods first turn the MPS into a network of isometries.
To be precise, a local tensor $A_{[n]}^x$ is left isometric if $\sum_x A_{[n]}^{x\dagger}A_{[n]}^x = I_{D_n}$ and right isometric if $\sum_x A_{[n]}^x A_{[n]}^{x\dagger} = I_{D_{n-1}}$.

We start from an MPS of bond dimension $D_{\mathrm{minDFA}}$ of the form 
$\begin{tikzpicture}[scale=.45, baseline={([yshift=-1ex]current bounding box.center)}, thick]
        \begin{scope}[shift={(0,0)}]
            \LeftMPSTensorRect{-6,0}{$A_{[1]}$}{purple}{0.1}
            \node at (-4.25,0) {$\dots$};
            \MPSTensorRect{-2.15,0}{$A_{[N-1]}$}{purple}{0.5}
            \RightMPSTensorRect{0,0}{$A_{[N]}$}{purple}{0.15}
        \end{scope}
    \end{tikzpicture}$,
adapting the uniform-bulk construction of \cref{lemma:dfa_to_uniform_MPS} by setting 
$A_{[1]}^x = \bra{v_l} A^x$, $A_{[N]}^x = A^x \ket{v_r}$, and $A_{[i]}^x = A^x \ \text{for each } i \in \{2, \dots, N-1\}$.
The isometrization procedure depends on the chosen scheme.

\subsubsection{Sequential SVD pass (SeqRLSP)} 
\label{sec: sequential svd}
In the sequential approach, we convert the MPS into a canonical form through a double SVD sweep along the chain of tensors, first from right-to-left, and then left-to-right, as illustrated in \Cref{fig:seqSVD}. 
This double-sweep SVD pass has two jobs: (1) it produces isometries, and (2) it eliminates the redundant dimensions from the MPS tensors, making the local bond dimensions match the Schmidt rank $\chi_n$ across each cut.
\begin{figure}[h!]
    \centering
    \includegraphics[width=\linewidth]{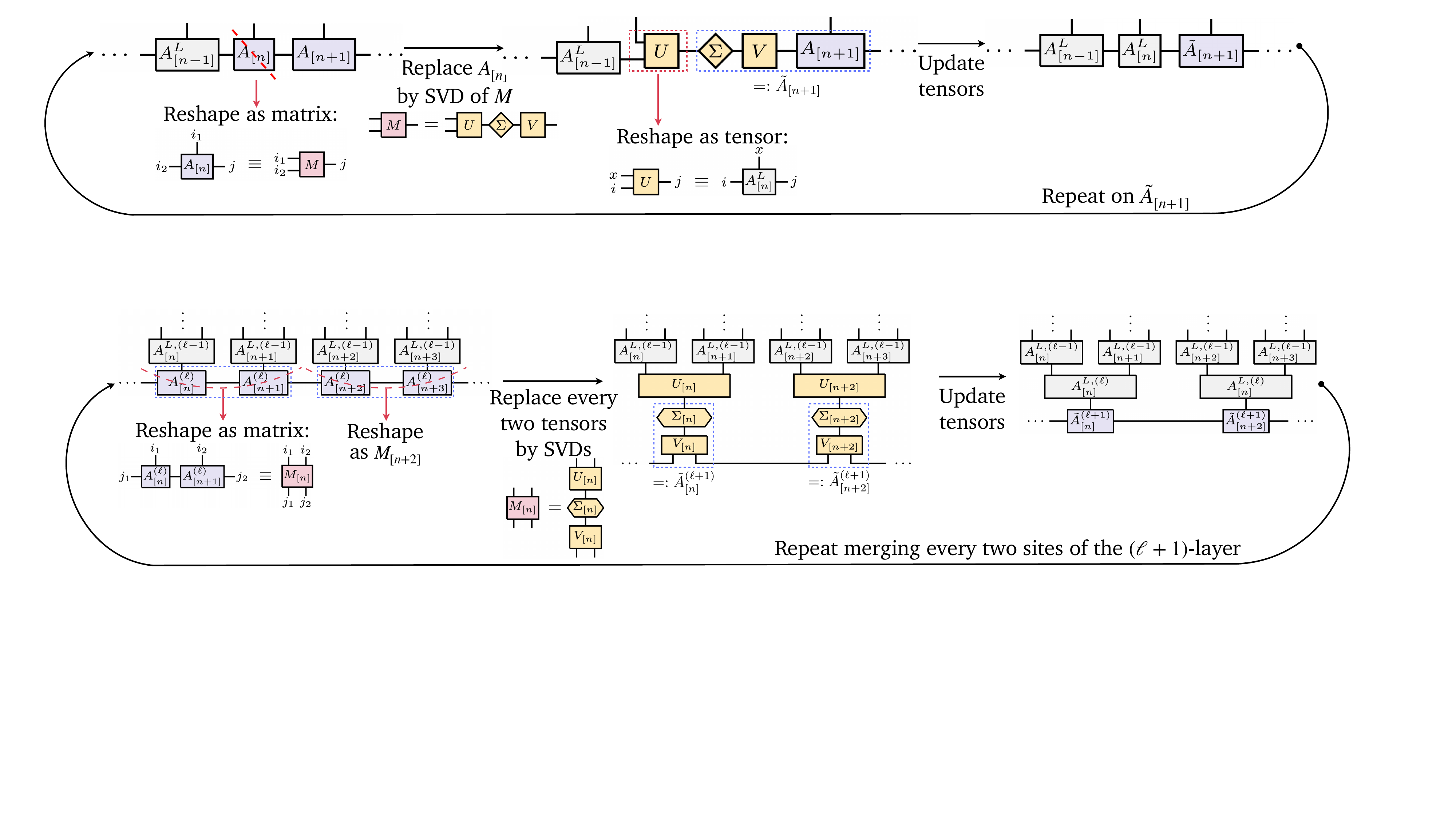}
    \caption{Illustration of the left-to-right sequential SVD procedure. A right-to-left SVD sweep is performed analogously, by mirroring the procedure from the opposite boundary. Each resulting tensor $A_{[n]}^L$ is an isometry of size $\chi_{n-1} \times \chi_n$, with $\chi_n \leq \min \{\chi, 2^n, 2^{N-n}\}$, for $n \in \{1, N-1\}$. It holds that $D_0 = D_N = 1$.}
    \label{fig:seqSVD}
\end{figure}

Since $\chi_n$ is the rank of the $2^n \times 2^{N-n}$ matrix $\Psi^{(n)}$ obtained by reshaping the amplitudes of $\ket{\psi}$ on th $n$-th cut, we have $\chi_n \leq 2^{\min\{n, N-n\}}$.
This property guarantees that the bond dimensions around the boundaries are small, enabling qubit reuse and no ancillae beyond system size (see \cref{fig:seqRLSP circuit}). 
Overall, the bond dimension of this canonical MPS is optimal, equal to the maximal Schmidt rank $D=\chi$. 
This pass eliminates any discrepancy between $D_{\mathrm{minDFA}}$ and $\chi$.

The procedure requires performing a total of $2N$ SVDs. Since computing the SVD of an $m \times n$ matrix costs $O(\min(mn^2, m^2n))$, and the intermediate dimensions throughout the algorithm are upper bounded by $D_{\mathrm{minDFA}}$, the total cost of this pass is $O(ND_{\mathrm{minDFA}}^3)$.

\subsubsection{Tree-based SVD pass (TreeRLSP)}
\label{sec: tree-based SVD}
In the tree-based approach, the canonical MPS produced by the sequential SVD pass above is postprocessed into a tree TN with $O(\log N)$ layers and $O(N)$ isometries~\cite{Shi_2006_tree-TNs, Cirac_2009_trees1, malz2024preparation} via a hierarchical, layer-by-layer SVD scheme (\cref{fig:treeSVD}). 
At each step, the uppermost vertical bonds in the tree carry the physical dimension ($2$ for qubits), the horizontal bonds are bounded by $\chi$, and all remaining vertical bonds by $\chi^2$. 
When reshaping two tensors into the matrix $M_{[n]}$, each of the two upper vertical bonds is bounded by $\chi^2$ and each of the lower ones by $\chi$.
Consequently, the pass requires $O(N)$ SVDs of matrices of size at most $\chi^4 \times \chi^2$, yielding an overall running time of $O(N\chi^8)$.
\begin{figure}[h!]
    \centering
    \includegraphics[width=1\linewidth]{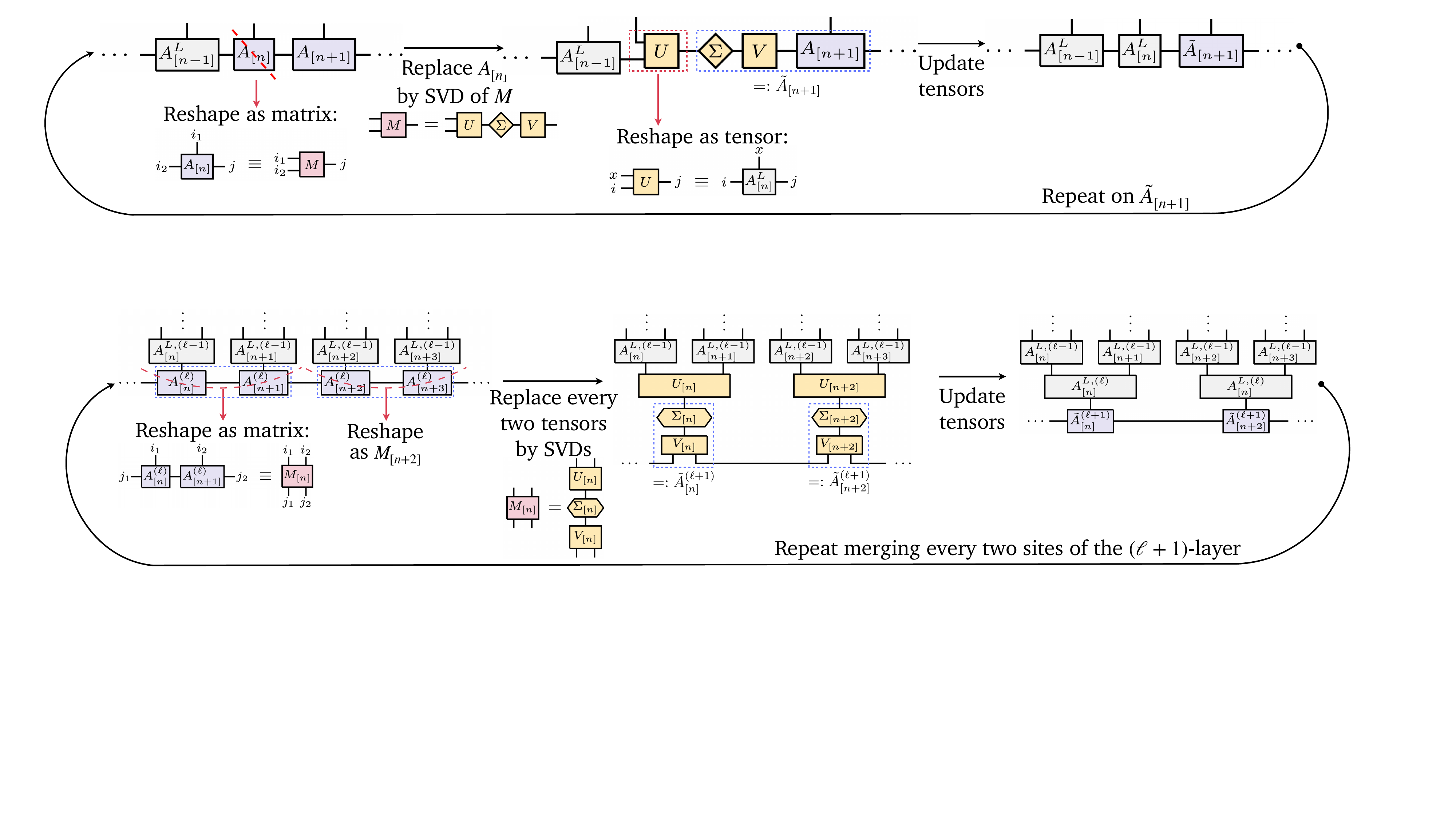}
    \caption{Illustration of the tree-based SVD pass at the $\ell$-th layer. The uppermost vertical bonds in the tree carry the physical dimension ($2$ for qubits), the horizontal bonds are bounded by $\chi$, and all remaining vertical bonds by $\chi^2$. 
    When the number of tensors in a layer is odd, the unpaired one is propagated to the next layer.
    }
    \label{fig:treeSVD}
\end{figure}

\subsection{Isometries to Quantum Circuits}
\label{sec:isometry_to_circuit}
In the previous step, we obtained a TN description of the target RLS. 
The challenge of this last step is to translate the isometric tensors from the TN description into quantum circuit operations and to subsequently transpile them for the target hardware device, considering its universal quantum gate set and connectivity. 

Firstly, since we aim to create a quantum circuit acting on qubits, one needs to appropriately pad the isometric tensors such that each local bond dimension $D_n$ increases to the nearest upper power of $2$, $D_n = 2^{\ceil{\log_2(\chi_n)}} \in O(\chi_n)$. 

Secondly, the padded tensors need to be mapped to quantum circuit operations. This mapping from a TN to quantum circuit operations for both backends can be seen in \eqref{eq:unitary embedding} and \Cref{fig:isometries to circuits}, which clarify how the TN topology affects the required quantum target connectivity. While the \SeqRLSP maps natively to LNN, the operations in \TreeRLSP acts on geometrically non-local qubits, showing why it is more suitable for all-to-all connected devices. 

Thirdly, the quantum operations from \cref{eq:unitary embedding} are converted from a four dimensional tensor form to isometric matrices (two dimensional) by combining the incoming and outgoing legs. One can see that the maximum size of the isometric matrices is $O(2\chi \times \chi)$ for \SeqRLSP, and $O(\chi^4 \times \chi^2)$ for \TreeRLSP. These isometric matrices describe a quantum circuit action mapping an $m$ qubit state, along with $n-m$ qubits initialized to $\ket{0}$, to another $n$ qubit state, $n\geq m$. 
\vspace{-0.2cm}
\begin{equation}
\label{eq:unitary embedding}
    \text{SeqRLSP :}
    {}
    \begin{tikzpicture}[scale=.45, baseline={([yshift=-0.5ex]current bounding box.center)}, thick]
        \begin{scope}[shift={(0,0)}]
            \draw (-0.7,0) -- (1.7,0);
            \draw (-0.7,1) -- (1.7,1);
            \filldraw[fill=yellow] (-0.2,-0.2) -- (-0.2,1.2) -- (1.2,1.2) -- (1.2,-0.2) -- (-0.2,-0.2);
            \draw (0.5,0.5) node {\scriptsize $U_{[n]}$};
            \draw (-1,1) node {\scriptsize $\alpha$};
            \draw (-1.1,0) node {\scriptsize $\bra{0}$};
            \draw (1.9,1) node {\scriptsize $i$};
            \draw (1.9,0) node {\scriptsize $\beta$};
        \end{scope}
    \end{tikzpicture}
    = \begin{tikzpicture}[scale=.45, baseline={([yshift=-1.3ex]current bounding box.center)}, thick]
        \begin{scope}[shift={(0,0)}]
            \MPSTensorRect{(0,0)}{$A^L_{[n]}$}{gray!10}{0.1}
            \draw (-1.3,0) node {\scriptsize $\alpha$};
            \draw (1.3,0) node {\scriptsize $\beta$};
            \draw (0,1.3) node {\scriptsize $i$};
        \end{scope}
    \end{tikzpicture}
    ,
    \quad 
    \text{TreeRLSP :} 
    \begin{tikzpicture}[scale=.45, baseline={([yshift=-0.5ex]current bounding box.center)}, thick]
        \begin{scope}[shift={(0,0)}]
            \draw (-0.7,0) -- (1.7,0);
            \draw (-0.7,1) -- (1.7,1);
            \filldraw[fill=yellow] (-0.2,-0.2) -- (-0.2,1.2) -- (1.2,1.2) -- (1.2,-0.2) -- (-0.2,-0.2);
            \draw (0.5,0.5) node {\scriptsize $U_{[n]}^{(\ell)}$};
            \draw (-1,1) node {\scriptsize $\alpha$};
            \draw (-1.1,0) node {\scriptsize $\bra{0}$};
            \draw (1.9,1) node {\scriptsize $\beta$};
            \draw (1.9,0) node {\scriptsize $\gamma$};
        \end{scope}
    \end{tikzpicture}
    = \begin{tikzpicture}[scale=.45, baseline={([yshift=-0.5ex]current bounding box.center)}, thick]
        \begin{scope}[shift={(0,0)}]
            \draw (0,0) -- (1.7,0);
            \draw (0,1) -- (1.7,1);
            \draw (-0.7,0.5) -- (0,0.5);
            \filldraw[fill=gray!10] (-0.25,-0.2) -- (-0.25,1.2) -- (1.2,1.2) -- (1.2,-0.2) -- (-0.25,-0.2);
            \draw (0.5,0.5) node {\scriptsize $A_{[n]}^{L,(\ell)}$};
            \draw (-1,0.5) node {\scriptsize $\alpha$};
            \draw (1.9,1) node {\scriptsize $\beta$};
            \draw (1.9,0) node {\scriptsize $\gamma$};
        \end{scope}
    \end{tikzpicture}.
\end{equation}

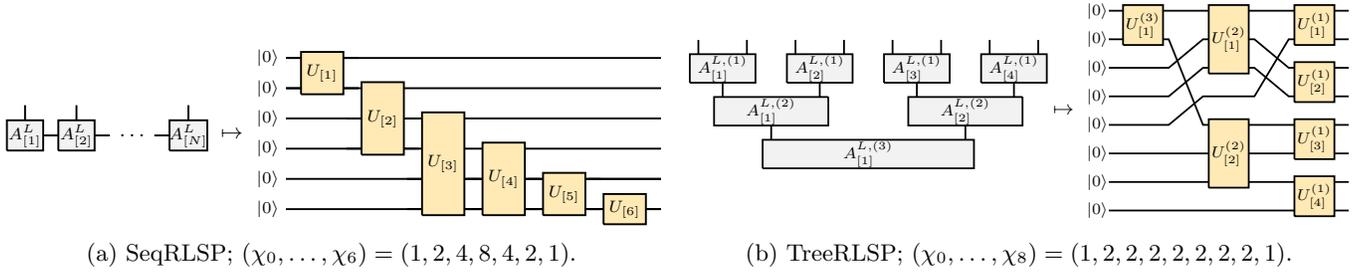
\begin{figure}[t]
\label{fig:tensor_to_circuit}
    \begin{subfigure}{0.49\textwidth}
        \centering
        \resizebox{\linewidth}{!}{$\displaystyle
            \begin{tikzpicture}[scale=.45, baseline={([yshift=-1ex]current bounding box.center)}, thick]
                \LeftMPSTensorRect{0,0}{$A_{[1]}^L$}{gray!10}{0.1}
                \MPSTensorRect{1.7,0}{$A_{[2]}^L$}{gray!10}{0.1}
                \RightMPSTensorRect{5.4,0}{$A_{[N]}^L$}{gray!10}{0.125}
                \node at (3.6,0) {$\dots$};
            \end{tikzpicture} 
            \mapsto
            \begin{tikzpicture}[scale=.45, baseline={([yshift=-0.5ex]current bounding box.center)}, thick]
                \foreach \x in {0,-1,...,-5} {
                    \draw (-1.3,\x) node {\scriptsize $\ket{0}$};
                    \draw (-0.7,\x) -- (11.7,\x);
                }
                \begin{scope}[shift={(0,-1)}]
                    \draw (-0.7,0) -- (1.7,0);
                    \draw (-0.7,1) -- (1.7,1);
                    \filldraw[fill=yellow] (-0.2,-0.2) -- (-0.2,1.2) -- (1.2,1.2) -- (1.2,-0.2) -- (-0.2,-0.2);
                    \draw (0.5,0.5) node {\scriptsize $U_{[1]}$};
                \end{scope}
                \begin{scope}[shift={(2,-3)}]
                    \draw (-0.7,0) -- (1.7,0);
                    \draw (-0.7,1) -- (1.7,1);
                    \filldraw[fill=yellow] (-0.2,-0.2) -- (-0.2,1.2+1) -- (1.2,1.2+1) -- (1.2,-0.2) -- (-0.2,-0.2);
                    \draw (0.5,0.5+0.5) node {\scriptsize $U_{[2]}$};
                \end{scope}
                \begin{scope}[shift={(4,-5)}]
                    \draw (-0.7,0) -- (1.7,0);
                    \draw (-0.7,1) -- (1.7,1);
                    \filldraw[fill=yellow] (-0.2,-0.2) -- (-0.2,1.2+2) -- (1.2,1.2+2) -- (1.2,-0.2) -- (-0.2,-0.2);
                    \draw (0.5,0.5+1) node {\scriptsize $U_{[3]}$};
                \end{scope}
                \begin{scope}[shift={(6,-5)}]
                    \draw (-0.7,0) -- (1.7,0);
                    \draw (-0.7,1) -- (1.7,1);
                    \filldraw[fill=yellow] (-0.2,-0.2) -- (-0.2,1.2+1) -- (1.2,1.2+1) -- (1.2,-0.2) -- (-0.2,-0.2);
                    \draw (0.5,0.5+0.5) node {\scriptsize $U_{[4]}$};
                \end{scope}
                \begin{scope}[shift={(8,-5)}]
                    \draw (-0.7,0) -- (1.7,0);
                    \draw (-0.7,1) -- (1.7,1);
                    \filldraw[fill=yellow] (-0.2,-0.2) -- (-0.2,1.2) -- (1.2,1.2) -- (1.2,-0.2) -- (-0.2,-0.2);
                    \draw (0.5,0.5) node {\scriptsize $U_{[5]}$};
                \end{scope}
                \begin{scope}[shift={(10,-5)}]
                    \draw (-0.7,0) -- (1.7,0);
                    \draw (-0.7,1) -- (1.7,1);
                    \filldraw[fill=yellow] (-0.2,-0.5) -- (-0.2,0.5) -- (1.2,0.5) -- (1.2,-0.5) -- (-0.2,-0.5);
                    \draw (0.5,0) node {\scriptsize $U_{[6]}$};
                \end{scope}
            \end{tikzpicture}
        $}
        \caption{\SeqRLSP; $(\chi_0, \dots, \chi_6) = (1, 2, 4, 8, 4, 2, 1)$.}
        \label{fig:seqRLSP circuit}
    \end{subfigure}
    \hfill
    \begin{subfigure}{0.49\textwidth}
    \centering
    \resizebox{\linewidth}{!}{$\displaystyle
        \begin{tikzpicture}[scale=.45, baseline={([yshift=-1ex]current bounding box.center)}, thick]
            \foreach \x in {0,1,...,7} {
                \draw ({1.7*\x},0) -- ({1.7*\x},1);
            }
            \foreach \x/\k in {0/1, 3.4/2, 6.8/3, 10.2/4} {
                \filldraw[fill=gray!10] ({\x-0.3},-0.5) -- ({\x+2},-0.5) -- ({\x+2},0.5) -- ({\x-0.3},0.5) -- ({\x-0.3},-0.5); 
                \draw (\x + 0.85,0) node {\scriptsize $A^{L,(1)}_{[\k]}$};
             }
             \foreach \x/\k in {0/1, 6.8/2} {
                \draw (\x+0.85,-0.5) -- (\x+0.85,-1);
                \begin{scope}[shift={(\x+0.85,-1.5)}]
                    \filldraw[fill=gray!10] (3.7,0.5) -- (3.7,-0.5) -- (-0.3,-0.5) -- (-0.3,0.5) -- (3.7,0.5);
                    \draw (3.4,0.5) -- (3.4,1);
                    \draw (1.7,0) node {\scriptsize $A^{L,(2)}_{[\k]}$};
                \end{scope}
                \begin{scope}[shift={(\x+1.7+0.85,-2.9)}]
                    \draw (0,0.4) -- (0,0.9);
                \end{scope}
             }
             \begin{scope}[shift={(0,-3)}]
                \filldraw[fill=gray!10] (2.25,-0.5) -- (2.25,0.5) -- (9.65,0.5) -- (9.65,-0.5) -- (2.25,-0.5);
                \draw (5.95,0) node {\scriptsize $A^{L,(3)}_{[1]}$};
             \end{scope}
        \end{tikzpicture}
        \mapsto
        \begin{tikzpicture}[scale=.45, baseline={([yshift=-0.5ex]current bounding box.center)}, thick]
            \foreach \x in {0,-1,...,-7} {
                \draw (-3.1,\x) node {\scriptsize $\ket{0}$};
            }
            \draw (-2.7,0) -- (5.7,0);
            \draw (-2.7,-1) -- (-0.6,-1) -- (0.6,-4) -- (5.7,-4);
            \draw (-2.7,-2) -- (-0.6,-2) -- (0.6,-1) -- (2.4,-1) -- (3.6,-2) -- (5.7,-2);
            \draw (-2.7,-3) -- (-0.6,-3) -- (0.6,-2) -- (2.4,-2) -- (3.6,-3) -- (5.7,-3);
            \draw (-2.7,-4) -- (-0.6,-4) -- (0.6,-3) -- (2.4,-3) -- (3.6,-1) -- (5.7,-1);
            \draw (-2.7,-5) -- (5.7,-5);
            \draw (-2.7,-6) -- (5.7,-6);
            \draw (-2.7,-7) -- (5.7,-7);
            \begin{scope}[shift={(-2,-1)}]
                \filldraw[fill=yellow] (-0.2,-0.2) -- (-0.2,1.2+0) -- (1.2,1.2+0) -- (1.2,-0.2) -- (-0.2,-0.2);
                \draw (0.5,0.5+0) node {\scriptsize $U_{[1]}^{(3)}$};
            \end{scope}
            \begin{scope}[shift={(1,-2)}]
                \filldraw[fill=yellow] (-0.2,-0.2) -- (-0.2,1.2+1) -- (1.2,1.2+1) -- (1.2,-0.2) -- (-0.2,-0.2);
                \draw (0.5,0.5+0.5) node {\scriptsize $U_{[1]}^{(2)}$};
            \end{scope}
            \begin{scope}[shift={(1,-6)}]
                \filldraw[fill=yellow] (-0.2,-0.2) -- (-0.2,1.2+1) -- (1.2,1.2+1) -- (1.2,-0.2) -- (-0.2,-0.2);
                \draw (0.5,0.5+0.5) node {\scriptsize $U_{[2]}^{(2)}$};
            \end{scope}
            \begin{scope}[shift={(4,-1)}]
                \filldraw[fill=yellow] (-0.2,-0.2) -- (-0.2,1.2) -- (1.2,1.2) -- (1.2,-0.2) -- (-0.2,-0.2);
                \draw (0.5,0.5) node {\scriptsize $U_{[1]}^{(1)}$};
            \end{scope}
            \begin{scope}[shift={(4,-3)}]
                \filldraw[fill=yellow] (-0.2,-0.2) -- (-0.2,1.2) -- (1.2,1.2) -- (1.2,-0.2) -- (-0.2,-0.2);
                \draw (0.5,0.5) node {\scriptsize $U_{[2]}^{(1)}$};
            \end{scope}
            \begin{scope}[shift={(4,-5)}]
                \filldraw[fill=yellow] (-0.2,-0.2) -- (-0.2,1.2) -- (1.2,1.2) -- (1.2,-0.2) -- (-0.2,-0.2);
                \draw (0.5,0.5) node {\scriptsize $U_{[3]}^{(1)}$};
            \end{scope}
            \begin{scope}[shift={(4,-7)}]
                \filldraw[fill=yellow] (-0.2,-0.2) -- (-0.2,1.2) -- (1.2,1.2) -- (1.2,-0.2) -- (-0.2,-0.2);
                \draw (0.5,0.5) node {\scriptsize $U_{[4]}^{(1)}$};
            \end{scope}
        \end{tikzpicture}
    $}
        \caption{TreeRLSP; $(\chi_0, \dots, \chi_8) = (1, 2, 2, 2, 2, 2, 2, 2, 1)$.}
        \label{fig:treeRLSP circuit}
    \end{subfigure}
    \caption{Visual mapping between the TN and quantum circuit operations. (a) Mapping for \SeqRLSP. (b) Mapping for \TreeRLSP.
    }
    \label{fig:isometries to circuits}
\end{figure}

Fourth, given an $2^n \times 2^m$ isometric matrix, we use a built in \emph{Qiskit} transpilation method~\cite{qiskit, qiskit_isometry} to map it into a circuit over a chosen universal gate set with a total gate ($\CNOT$) cost $O(2^{m+n})$ and using classical pre-processing time $O(n2^{2m+n})$~\cite{qiskit_comp_classical_time}. This leads to the gate cost for implementing a single operation $O(\chi^2)$ for \SeqRLSP and $O(\chi^6)$ for \TreeRLSP. Since there are $O(N)$ of such isometries in both pipelines, \SeqRLSP and \TreeRLSP, the total gate count incurs a factor of $N$. Once the circuit is expressed in terms of the padded isometric matrices, any standard transpiler can be used to complete synthesis, making our pipeline flexible to improvements at this stage. 
This is also where our compiler may incur overhead relative to purpose-built routines (e.g., Dicke/W): we do not analyze the resulting isometries in detail or optimize the MPS passes for elementary which may obscure exploitable structure and hence miss additional resource reductions. 
Such targeted optimizations are beyond the scope of this work.

\section{Compilation Costs and Circuit Guarantees}
\label{sec: costs and guarantees}
In this section, we review, discuss, and theoretically bound the worst-case performance of our pipeline, both in terms of compile time and quantum resources.
While doing so, we justify our design choices.
Furthermore, we tightly relate the Schmidt rank of an RLS to the one of its complement, relating their entanglement and showing that our methods can synthesize both with the same scaling.
\Cref{table:our_scaling} summarizes the scalings of both backends - \SeqRLSP and \TreeRLSP.

\paragraph{Compilation cost}
We study the compilation time performance as a function of the system size $N$, the initial DFA state count $|Q|$, and the minimal DFA size $D_{\mathrm{minDFA}}$ (Def.~\ref{def: dinit dminDFA}).

The frontend cost of building the automaton depends on the user description: (1) for a \emph{set of strings} of size $\spar$, the DAG-DFA construction costs $O(N\spar)$; (2) for a \emph{regex}, determinization can be exponential in the description size (worst case), though regex-to-DFA pipelines are widely effective in practice; (3) for a \emph{user-supplied DFA}, the compiler proceeds directly to minimization.

Starting from an automaton, we can bound compile time as follows.
\begin{theorem}[Compile-time from the automaton]
\label{thm: compile-time dfa}
    Let the target RLS be represented by a DFA with $|Q|$ internal states, and let $D_{\mathrm{minDFA}}$ be as in \cref{def: dinit dminDFA}.
    Then, the circuit synthesis costs $O(|Q|\log(|Q|)+ND_{\mathrm{minDFA}}^3)$ for \SeqRLSP, and $O(|Q|\log(|Q|)+N(D_{\mathrm{minDFA}}^3+\chi^8))$ for \TreeRLSP, up to $O(\log(\chi))$ factors.
    For a DAG-DFA, the $\log(|Q|)$ factors can be removed.
\end{theorem}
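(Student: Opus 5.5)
The plan is to bound the cost of each stage of the pipeline of \cref{sec:methodology} separately, starting from the given automaton, and then add the bounds. The stages are: (i) minimization of the input automaton, (ii) the DFA-to-MPS map, (iii) the double SVD sweep of \cref{sec: sequential svd}, (iv) for \TreeRLSP only, the tree-based pass of \cref{sec: tree-based SVD}, and (v) conversion of the padded isometries to circuits. Since the complement flag only adds a sink state and flips accepting states, it costs $O(|Q|)$, or $O(ND_{\mathrm{minDFA}})$ in the DAG case, and increases $D_{\mathrm{minDFA}}$ by at most one, so it is absorbed.

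\textbf{Front of the IR.} For (i), I will invoke Hopcroft's algorithm, which costs $O(|Q|\log|Q|)$ for a fixed binary alphabet. For a DAG-DFA I will instead use Revuz's acyclic minimization, whose cost is linear in the size of the automaton. This is where the $\log|Q|$ factor disappears in the last claim of the theorem. For (ii), \cref{lemma:dfa_to_uniform_MPS} and \cref{lemma:dag-dfa-to-mps} write one entry per transition, so filling the tensors costs $O(N D_{\mathrm{minDFA}}^2)$ with dense matrices, which is dominated by the next term.

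\textbf{Linear algebra.} For (iii), the sweep consists of $2N$ SVDs of matrices whose dimensions are at most $2D_{\mathrm{minDFA}}\times D_{\mathrm{minDFA}}$. Each SVD costs $O(D_{\mathrm{minDFA}}^3)$, and absorbing each factor into the neighbouring tensor is a matrix product of the same cost, so the total is $O(ND_{\mathrm{minDFA}}^3)$. After this sweep the bonds equal the Schmidt ranks, so every later stage is expressed in $\chi\le D_{\mathrm{minDFA}}$. For (iv), I rely on the count given in \cref{sec: tree-based SVD}: $O(N)$ SVDs of matrices of size at most $\chi^4\times\chi^2$, each costing $O(\chi^4\cdot\chi^4)=O(\chi^8)$. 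This gives $O(N\chi^8)$, added only for \TreeRLSP.

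\textbf{Circuit synthesis and main obstacle.} For (v), padding raises each bond dimension to at most $2\chi$. By \cite{qiskit_comp_classical_time}, a $2^n\times 2^m$ isometry costs $O(n2^{2m+n})$ classical time. For \SeqRLSP the isometries are $O(\chi)\times O(\chi)$ up to the factor $2$, giving $O(\log\chi\cdot\chi^3)$ each. For \TreeRLSP they are $O(\chi^4)\times O(\chi^2)$, giving $O(\log\chi\cdot\chi^{8})$ each. The factor $n=O(\log\chi)$ is exactly the ``up to $O(\log\chi)$'' slack in the statement. Multiplying by the $O(N)$ isometries and using $\chi\le D_{\mathrm{minDFA}}$, the \SeqRLSP term is absorbed into $ND_{\mathrm{minDFA}}^3$ and the \TreeRLSP term into $N\chi^8$.

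I expect the main obstacle to be a bookkeeping issue in the \TreeRLSP branch: checking that the $2^{2m+n}$ exponent, with $m$ the input-qubit count $\approx\log_2\chi^2$ and $n\approx\log_2\chi^4$, indeed gives $\chi^8$ and not more. I will settle this by fixing the input/output orientation of the isometries in \eqref{eq:unitary embedding} before doing the count. Summing (i)–(v) then yields both bounds claimed in the theorem.
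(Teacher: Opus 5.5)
Your proposal is correct and follows the same stage-by-stage accounting as the paper's proof: the complement flag, Hopcroft minimization (Revuz in the acyclic case, which removes the $\log|Q|$ factor), the DFA-to-MPS construction, the $O(ND_{\mathrm{minDFA}}^3)$ double SVD sweep, the $O(N\chi^8)$ tree pass, and Qiskit isometry synthesis at $O(N\chi^3)$ resp.\ $O(N\chi^8)$ up to $\log\chi$ factors. Your orientation check for the tree isometries ($2^n=\chi^4$, $2^m=\chi^2$, hence $2^{2m+n}=\chi^8$) is right, and your explicit use of $\chi\le D_{\mathrm{minDFA}}$ to absorb the synthesis term spells out a step the paper leaves implicit.
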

\begin{proof}
    Handling the optional complement flag costs $O(|Q|)$ (Sec.~\ref{def:complementRLS}). 
    Hopcroft's minimization runs in $O(|Q|\log(|Q|)$, and in the acyclic (DAG-DFA) case the $\log(|Q|)$ factor drops~(\cref{sec:user-to-automata}). 
    Constructing the MPS from \cref{lemma:dfa_to_uniform_MPS,lemma:dag-dfa-to-mps} can be done in time proportional to the state count of the minimal DFA, which is bounded by $O(ND_{\mathrm{minDFA}})$.
    The sequential SVD pass costs $O(ND_{\mathrm{minDFA}}^3)$ (\cref{sec: sequential svd}), and the tree network one $O(N\chi^8)$ (\cref{sec: tree-based SVD}). 
    Finally the isometry synthesis costs $O(N\chi^3)$ for SeqRLSP and $O(N\chi^8)$ for TreeRLSP, up to $\log(\chi)$ factors~(\cref{sec:isometry_to_circuit}).
\end{proof}

Together with a \emph{set of strings} input, we can characterize the pipeline's end-to-end complexity.
\begin{corollary}[Compile-time from a Set of Strings]
\label{coro: compile-time set of strings}
    Let the target RLS be specified by a \emph{set of strings} of cardinality $\spar$.
    The end-to-end circuit synthesis uses a DAG-DFA and costs $O(N(\spar + D_{\mathrm{minDFA}}^3))$ for \SeqRLSP and $O(N(\spar + D_{\mathrm{minDFA}}^3+\chi^8))$ for \TreeRLSP.
\end{corollary}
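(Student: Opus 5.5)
The corollary follows by composing the frontend cost of \cref{sec:user-to-automata} with \cref{thm: compile-time dfa}, specialised to the DAG-DFA case. The one quantity that needs care is the size of the initial automaton built from the strings.

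\emph{Step 1: build the trie.} From the $\spar$ strings we construct the trie-like DAG-DFA of \cref{subsubsec:explicit_list_strings}. Each string contributes at most $N$ new nodes and edges, so this costs $O(N\spar)$ time. The resulting automaton is layered, with layers $Q^{(0)},\dots,Q^{(N)}$. The total state count satisfies $|Q|\le N\spar+1$, and each layer has $D_{\mathrm{init}}\le \spar$ states, because every layer-$i$ node is a distinct length-$i$ prefix of some input string.

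\emph{Step 2: complement and minimisation.} If the complement flag is set, we apply the layered sink construction of \cref{sec:complement}. It adds one state per layer, so it costs $O(N)$ plus a pass over existing transitions, i.e.\ $O(N\spar)$. We then minimise with the linear-time acyclic algorithm of \cite{revuz_1992_min-acylic-linear-time}, whose cost is linear in the automaton size, $O(ND_{\mathrm{init}})\subseteq O(N\spar)$. This is where we use the ``$\log(|Q|)$ removed'' clause of \cref{thm: compile-time dfa}: Hopcroft's bound would instead give $O(N\spar\log(N\spar))$.

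\emph{Step 3: the rest of the pipeline.} From this point we invoke \cref{thm: compile-time dfa} without change. The MPS construction of \cref{lemma:dag-dfa-to-mps} costs $O(ND_{\mathrm{minDFA}}^2)$ or less. The sequential SVD pass costs $O(ND_{\mathrm{minDFA}}^3)$. For \TreeRLSP, the tree pass and the isometry synthesis add $O(N\chi^8)$, and we use $\chi\le D_{\mathrm{minDFA}}$ to absorb lower-order $\chi$ terms. Summing the three steps gives $O(N(\spar+D_{\mathrm{minDFA}}^3))$ and $O(N(\spar+D_{\mathrm{minDFA}}^3+\chi^8))$, respectively.

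\emph{Main obstacle.} The key point is the bound in Step 2: the minimisation must be linear in the trie size, and that size is $O(N\spar)$ rather than $O(N\spar^2)$ or $\log$-inflated. We would justify it by charging each trie node to the unique string-prefix that created it. We would also check that the $O(\log\chi)$ factors from isometry synthesis are absorbed, either because $\chi^3\log\chi\le D_{\mathrm{minDFA}}^3\log D_{\mathrm{minDFA}}$ or by carrying them implicitly as the theorem does.
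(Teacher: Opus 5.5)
Your proposal is correct and follows the paper's own argument. The paper's proof only observes that the trie-like DAG-DFA is built in $O(N\spar)$ time with $O(N\spar)$ states, and then feeds it into the log-free DAG-DFA case of \cref{thm: compile-time dfa}; your Steps 1--3 make this composition explicit, including the complement and Revuz-minimisation costs and the absorption of the $O(N\chi^3)$ synthesis term via $\chi\le D_{\mathrm{minDFA}}$. One small remark: $\chi^3\log\chi\le D_{\mathrm{minDFA}}^3\log D_{\mathrm{minDFA}}$ does not by itself remove the logarithm, so the $\log\chi$ factor has to be carried implicitly, exactly as the theorem (and hence the paper's corollary) does.
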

\begin{proof}
    The DAG-DFA can be created in time $O(N\spar)$, with as many internal states.
\end{proof}

These bounds highlight the importance of DFA minimization.
While the sequential SVD pass recovers an optimal MPS of bond dimension $\chi$ regardless, skipping minimization replaces $D_{\mathrm{minDFA}}$ by $D_{\mathrm{init}}$ in the $ND^3$ term.
There are examples where $D_{\mathrm{minDFA}}$ is constant in $N$, but $D_{\mathrm{init}}$ is not.
The advantage is especially clear for \emph{set of strings} input: e.g., the $N$-qubit W-State can be given by $\spar=N$ strings, leading to a DAG-DFA with $D_{\mathrm{init}} = N$, yet $D_{\mathrm{minDFA}}=2$ (see Example 1).
DFA minimization thus replaces an $N^3$ contribution by a constant.
This gap grows with larger supports (e.g., Dicke-$k$: $\binom{N}{k}$ vs $k+1$), as we notice that $D_{\mathrm{minDFA}}$ approximates $\chi$ well.

Finally, for the \emph{set of strings} input, one might wonder why use a DFA IR at all. 
The reason is that it gives \emph{efficient ways of building} the initial MPS from sparse descriptions.
Indeed, the textbook TN procedure~\cite[Sec. 4.1.3, Figure 5]{Schollwock_2011} requires performing $N$ exact SVDs on matrices of size up to $2^{N/2} \times 2^{N/2}$, incurring into an exponential overhead even for sparse descriptions. 

\paragraph{Complement Schmidt rank}
Before giving circuit guarantees and resource bounds, we relate the Schmidt ranks of an RLS and its complement.
Despite potentially exponential differences in support size, their bipartite entanglement is tightly linked, implying that our synthesis uses essentially the same quantum resources for either state.

\begin{theorem}[Complement-RLS Schmidt ranks]
    \label{thm:complementRLS}
    Across each cut, the Schmidt ranks of an RLS and its complement differ by at most one. 
    Equivalently, $|\chi_n - \bar{\chi}_n| \leq 1$.
\end{theorem}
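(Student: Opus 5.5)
Fix a length $N$ and a cut $n\in\{1,\dots,N-1\}$. The plan is to work with unnormalized amplitude matrices, since normalization does not change rank. Write $\Psi_{(n)}\in\{0,1\}^{2^n\times 2^{N-n}}$ for the reshaped indicator of $L\cap\Sigma^N$: its rows are indexed by prefixes $u\in\Sigma^n$, its columns by suffixes $v\in\Sigma^{N-n}$, and the entry is $1$ exactly when $uv\in L$. The complement language restricted to length $N$ is $\Sigma^N\setminus L$, so its reshaped indicator is
\begin{equation*}
\bar\Psi_{(n)} = J - \Psi_{(n)},
\end{equation*}
where $J=\mathbf{1}\mathbf{1}^T$ is the all-ones $2^n\times 2^{N-n}$ matrix. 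This holds for the regular complement $\bar L=\Sigma^*\setminus L$ of \cref{def:complementRLS}, and equally for the DAG-DFA version $\bar L\cap\Sigma^N$ of \cref{sec:complement}. By \cref{def:schmidt-rank-bipartition}, $\chi_n=\operatorname{rank}\Psi_{(n)}$ and $\bar\chi_n=\operatorname{rank}\bar\Psi_{(n)}$.

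The key step is rank subadditivity: $\operatorname{rank}(A+B)\le\operatorname{rank}A+\operatorname{rank}B$, together with $\operatorname{rank}J=1$. This gives $\bar\chi_n\le\chi_n+1$. By symmetry, $\Psi_{(n)}=J-\bar\Psi_{(n)}$, so $\chi_n\le\bar\chi_n+1$. Combining the two inequalities yields $|\chi_n-\bar\chi_n|\le 1$ for every cut, and taking the maximum over $n$ gives $|\chi-\bar\chi|\le1$.

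The only real obstacle is degenerate cases where one of the states is undefined. If $L\cap\Sigma^N=\emptyset$ or $L\cap\Sigma^N=\Sigma^N$, one of the two RLS has empty support and cannot be normalized as in \cref{def: RLS}. I would either exclude these cases explicitly or adopt the convention that the Schmidt rank of the zero vector is $0$. Under that convention the bound still holds: the other state is $\ket{+}^{\otimes N}$, which has rank $1$. I would also remark that the bound is tight. The W-state example shows the increase by one happens, and DFA minimization mirrors it, since $D_{\mathrm{minDFA}}$ grows by at most $1$ under complement.
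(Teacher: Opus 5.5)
Your proof is correct and is essentially the paper's argument: write the complement's reshaped amplitude matrix as the rank-one all-ones matrix minus that of the RLS, then apply rank subadditivity in both directions. Working with unnormalized indicator matrices, as you do, is slightly cleaner than the paper's identity $\ket{\bar{L}_N}=\ket{+}^{\otimes N}-\ket{L_N}$, which holds only up to normalization; your treatment of the degenerate empty-support cases is also a welcome addition.
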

\begin{proof}
    Let $\chi_n, \bar{\chi}_n$ denote the Schmidt ranks across the $n$-th cut for an RLS and its complement, respectively. 
    For each $N$, the complement RLS can be written as $\ket{\bar{L}_N} = \ket{+}^{\otimes N} - \ket{L_N}$.
    Hence, its amplitude matrix reshaped across the $n$-th cut is given by $\Psi^{(n)}_{\ket{\bar{L}_N}} = \Psi^{(n)}_{\ket{+}^{\otimes N}} - \Psi^{(n)}_{\ket{L_N}}$, and $\bar{\chi}_n$ is the rank of this matrix. Since $\mathrm{rank}(\Psi^{(n)}_{\ket{L_N}}) = \chi_n$ and $\mathrm{rank}(\Psi^{(n)}_{\ket{+}^{\otimes N}}) = 1$, together with the fact that any matrices $A,B$ satisfy that $|\mathrm{rank}(A) - \mathrm{rank}(B)| \leq \mathrm{rank}(A-B) \leq \mathrm{rank}(A)+\mathrm{rank}(B)$, the claim follows.
\end{proof}

\paragraph{Circuit guarantees and resource count} 
Finally, we discuss the worst-case quantum resources required by the two backends.
For \SeqRLSP, the total cost is dominated by $O(N)$ layers of unitaries, each acting on at most $\lceil \log_2(\chi) \rceil + 1$ neighboring qubits.
\begin{theorem}[\SeqRLSP resources] 
\label{thm:seqRLSP}
    Let $N$ be the system size and let the RLS have maximal Schmidt rank $\chi$. 
    The \SeqRLSP pipeline produces an LNN circuit that prepares the RLS (or its complement) using $O(\chi^2N)$ depth and elementary gates, with no ancillary qubits.
\end{theorem}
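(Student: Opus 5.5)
The plan is to assemble the bound from the pieces of the \SeqRLSP pipeline already described, and to treat the complement separately via \cref{thm:complementRLS}.

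\emph{Step 1: correctness and bond dimensions.} By \cref{lemma:dfa_to_uniform_MPS} (or \cref{lemma:dag-dfa-to-mps} for DAG-DFAs), the minimal DFA yields an exact MPS of $\ket{L_N}$, up to normalization. The double SVD sweep of \cref{sec: sequential svd} then turns it into a left-canonical MPS $A^L_{[1]},\dots,A^L_{[N]}$ representing the same (normalized) state. Its bond dimensions are exactly the Schmidt ranks, $\chi_n \le \min\{\chi,2^n,2^{N-n}\}$. Padding each bond to $2^{\lceil\log_2\chi_n\rceil}$ keeps the tensors isometric, and multiplies the bond dimension by less than a factor of $2$.

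\emph{Step 2: sequential generation.} Following \cite{schon_2005_sequentialMPS} and \eqref{eq:unitary embedding}, each $A^L_{[n]}$ is read as an isometry from $\lceil\log_2\chi_{n-1}\rceil$ qubits (together with one fresh qubit in $\ket{0}$) to $1+\lceil\log_2\chi_n\rceil$ qubits. It is then embedded into a unitary $U_{[n]}$. Applying $U_{[1]},\dots,U_{[N]}$ in order reproduces the MPS contraction, since each left-isometric tensor acts on the ``bond'' qubits carried forward. Because $\chi_N=1$, the last bond register returns to $\ket{0}$ and can be identified with physical qubits.

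The point needing care is the \emph{no-ancilla claim}. I would argue it through qubit reuse. Because $\chi_n\le 2^{\min\{n,N-n\}}$, the bond register at cut $n$ uses at most $\min\{n,N-n\}$ qubits. These qubits can always be chosen among the physical sites $n+1,\dots,N$ that are not yet output, and, on the decreasing side, among sites already fixed by the outgoing bond. The arrangement in \cref{fig:seqRLSP circuit} realizes this with a contiguous window, so every $U_{[n]}$ acts on at most $\lceil\log_2\chi\rceil+1$ adjacent qubits. This is what makes the circuit LNN-native and uses exactly $N$ qubits. I expect this bookkeeping, showing that the sliding window always fits inside the $N$ wires without extra qubits, to be the main obstacle. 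I would handle it by induction on $n$, separately for the growing region $n\le N/2$ and the shrinking region.

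\emph{Step 3: counting.} Each $U_{[n]}$ is an isometry between registers of dimension at most $2\chi$, up to the padding factor. By the Qiskit isometry synthesis cost $O(2^{m+n})$ quoted in \cref{sec:isometry_to_circuit}, this gives $O(\chi^2)$ CNOTs and single-qubit gates. These gates are local on a contiguous block, so no SWAP routing is needed, and their depth is also $O(\chi^2)$. Summing over the $N$ unitaries applied sequentially gives $O(\chi^2 N)$ gates and depth. For the complement, the same pipeline runs on $\bar L$ with ranks $\bar\chi_n\le\chi_n+1$ by \cref{thm:complementRLS}. Hence $\bar\chi\le\chi+1$, and $O((\chi+1)^2N)=O(\chi^2N)$ since $\chi\ge1$.
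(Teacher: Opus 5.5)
Your proposal is correct and follows essentially the same route as the paper's proof: exact DFA-to-MPS map, a Schmidt-rank canonical MPS realized as $O(N)$ sequential isometries with qubit reuse enabled by $\chi_n\le 2^{\min\{n,N-n\}}$, $O(\chi^2)$ gates per isometry, and \cref{thm:complementRLS} for the complement. Your placement of the bond register on the not-yet-output sites also spells out the no-ancilla argument that the paper only sketches. One small caution: a block supported on a contiguous window of $\lceil\log_2\chi\rceil+1$ wires is not automatically nearest-neighbour, so the $O(\chi^2)$ LNN count per isometry should rest on an LNN-adapted decomposition (e.g.\ a nearest-neighbour quantum Shannon decomposition, which costs only a constant factor) rather than on ``no SWAP routing is needed''; naively routing a generic decomposition would add an $O(\log\chi)$ factor.
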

\begin{proof}
    The circuit correctness follows from the definition of RLS~(\cref{def: RLS}), the DFA-MPS equivalence (\cref{lemma:dfa_to_uniform_MPS,lemma:dag-dfa-to-mps}), and the MPS sequential prepration protocol~\cite{schon_2005_sequentialMPS}, as detailed in \cref{sec:MPS-to-isometries}. 
    In \SeqRLSP, the circuit is a sequence of $O(N)$ nearest-neighbour isometries, each of size $O(\chi)$~(\cref{sec:MPS-to-isometries}). 
    Each such isometry can be decomposed into $O(\chi^2)$ elementary gates on an LNN architecture~(\cref{sec:isometry_to_circuit}).    
    Because these isometries are applied sequentially along the chain, the total gate count and depth are both $O(\chi^2N)$.
    By \cref{thm:complementRLS}, the complement state has Schmidt ranks that differ by at most one across every cut, so the same resource bounds apply.
\end{proof}

For \TreeRLSP, correctness follows as in \cref{thm:seqRLSP}, replacing the sequential scheme with the tree~(\cref{sec: tree-based SVD} and \cite{Cirac_2009_trees1, malz2024preparation}).
The tree has $O(\log_2 N)$ layers, and the number of isometries halves from one layer to the next, yielding $O(N)$ total isometries.
In the worst case, an isometry acts on $n = \lceil \log_2(\chi^4) \rceil$ qubits, and has dimensions $\chi^4 \times \chi^2$ (\cref{sec: tree-based SVD}). 
Because these isometries may couple non-adjacent qubits, this backend is more suited to all-to-all platforms such as trapped ions~\cite{liu2025certified} or neutral atoms~\cite{bluvstein2024logical}. 
We collect these considerations in the following theorem.

\begin{theorem}[TreeRLSP resources]
\label{thm:treeRLSP}
    Let $N$ be the system size and let the RLS have maximal Schmidt rank $\chi$.
    The \TreeRLSP pipeline produces an all-to-all circuit that prepares the RLS (or its complement) using $O(\chi^6 \log(N))$ depth, $O(\chi^6N)$ elementary gates, and no ancillae. 
\end{theorem}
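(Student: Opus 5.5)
I will mirror the structure of the proof of \cref{thm:seqRLSP}, replacing the sequential scheme with the tree network of \cref{sec: tree-based SVD}.

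\emph{Correctness.} First I will establish that the circuit prepares the target state. This follows from \cref{def: RLS}, the DFA/DAG-DFA to MPS correspondence (\cref{lemma:dfa_to_uniform_MPS,lemma:dag-dfa-to-mps}), and the canonical MPS produced by the double SVD sweep. The hierarchical SVD pass then rewrites this canonical MPS as a tree of isometries, exactly as in \cite{Cirac_2009_trees1, malz2024preparation}. Each isometry is padded to power-of-two dimensions and embedded into a unitary acting partly on $\ket{0}$ inputs, as in \eqref{eq:unitary embedding}. Contracting the tree from the root downwards then reproduces the state.

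\emph{Ancilla count.} The key observation is that every bond leaving an isometry toward the leaves is eventually carried by physical qubits of the register. At the root, only $O(\log\chi)$ qubits of the final register are used. At each layer, the qubits holding the bond indices are redistributed onto fresh register qubits initialized in $\ket{0}$ (\cref{fig:treeRLSP circuit}). At the leaves, each wire carries exactly one physical qubit. I expect this step to be the main obstacle, since padding the bonds to powers of two might seem to require extra qubits. I will argue that it does not: the number of qubits needed to host the bonds at layer $\ell$ is $O(2^{\ell}\log\chi)$, which never exceeds the number of sites in the subtrees, namely $N$. This holds because $\chi_n\le 2^{\min\{n,N-n\}}$, so a bond of dimension $\chi$ fits in the $\lceil\log_2\chi\rceil$ qubits of the block it describes. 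The small blocks near the leaves have correspondingly small bonds. When a layer contains an odd number of tensors, the unpaired one is simply propagated to the next layer and adds nothing.

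\emph{Gate and depth counts.} I will count resources layer by layer.
\begin{itemize}
\item The tree has $O(\log N)$ layers, and the number of isometries halves at each layer, giving $O(N)$ isometries in total.
\item Each isometry has dimensions at most $\chi^4\times\chi^2$ after padding, which only changes dimensions by a factor of at most $2$. By the Qiskit isometry synthesis cost $O(2^{m+n})$ (\cref{sec:isometry_to_circuit}), each isometry costs $O(\chi^6)$ elementary gates.
\item Multiplying by the $O(N)$ isometries gives $O(\chi^6 N)$ gates.
\item Isometries within one layer act on disjoint qubits, so they run in parallel under all-to-all connectivity. Each layer therefore contributes depth $O(\chi^6)$, and the total depth is $O(\chi^6\log N)$.
\end{itemize}

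\emph{Complement.} By \cref{thm:complementRLS}, the complement has $\bar\chi\le\chi+1$, so the same asymptotic bounds hold for it.
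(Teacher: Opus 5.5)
Your correctness argument, gate count and depth count follow the paper's route. That route is: the DFA/MPS correspondence plus the tree of isometries from \cref{sec: tree-based SVD}; $O(N)$ isometries in $O(\log N)$ layers, each $O(\chi^4)\times O(\chi^2)$ and hence $O(\chi^6)$ gates by the synthesis cost in \cref{sec:isometry_to_circuit}; disjoint supports within a layer, giving depth $O(\chi^6\log N)$; and \cref{thm:complementRLS} for the complement. (Padding acts on each of the two output vertical bonds separately, so the row dimension can grow by up to a factor $4$, not $2$; this is harmless.) The paper's proof stops at these counts and does not argue the ancilla claim separately. The step you add for it does not work as written.

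The inequality $\chi_n\le 2^{\min\{n,N-n\}}$ bounds a horizontal bond by its distance to the ends of the chain. It says nothing about a block of $m$ sites in the middle of the chain, whose two adjacent horizontal bonds can both equal $\chi$. Also, the bonds stored on register qubits between tree layers are the vertical ones, of dimension up to $\chi^2$, not $\chi$. With only the $\chi^2$ bound, the stage just before the first layer would host $N/2$ vertical bonds of $\lceil\log_2\chi^2\rceil$ qubits each. That is more than $N$ qubits as soon as $\chi>2$. In any case, an asymptotic count like $O(2^\ell\log\chi)$ cannot certify an exact budget of $N$ qubits.

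What you need is a per-block bound. In the tree SVD pass, the vertical bond created by merging two blocks is the rank $r$ of $M_{[n]}$. The row index of $M_{[n]}$ runs over the product of the two children's vertical bonds. By induction from the physical legs (dimension $2$), a block of $m$ sites therefore has $r\le\min\{\chi^2,2^m\}$. Padding to $2^{\lceil\log_2 r\rceil}\le 2^m$ keeps this within $m$ qubits.

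Consider a node whose block has $m=m_1+m_2$ sites. Its isometry takes its input bond and zero-initialized qubits to the two child bonds, which fit in $m_1$ and $m_2$ qubits, so it acts on at most $m$ qubits of its own block. Sibling blocks are disjoint, and all-to-all connectivity makes the wiring free. So at every stage at most $N$ qubits are in use and no ancillae are needed. Your remark that ``small blocks near the leaves have correspondingly small bonds'' is the right intuition; this rank argument is what actually justifies it.
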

\begin{proof}
    Proceeds equivalently to \cref{thm:seqRLSP}, with the considerations above.
\end{proof}

\begin{table}[]
    \centering
    \resizebox{\linewidth}{!}{
    \begin{tabular}{|c|c|c|c|c|c|}
    \hline
         Method & Total gates & Depth & Connectivity & Compile time (DFA input) & Compile time (set of $\spar$ strings) \\
         \hline
        \SeqRLSP & $O(\chi^2N)$ & $O\left(\chi^2N\right)$ & Local &  $O\left(|Q|\log(|Q|)+ND_{\mathrm{minDFA}}^3\right)$ & $O(N(\spar + D_{\mathrm{minDFA}}^3))$\\
        \TreeRLSP & $O(\chi^6 N )$ & $O\left(\chi^6 \log(N)\right)$ & All-to-All & $O\left(|Q|\log(|Q|)+N(D_{\mathrm{minDFA}}^3+\chi^8)\right)$ & $O(N(\spar + D_{\mathrm{minDFA}}^3+\chi^8))$ \\
        \hline
    \end{tabular}
    }
    \caption{Resource and compile time scaling of \SeqRLSP and \TreeRLSP for an RLS and its complement. 
    $N$ is the system size, $\chi$ the maximal Schmidt rank~(\cref{def:schmidt-rank-bipartition}), $|Q|$ the initial automaton state count, and $D_{\mathrm{minDFA}}$ as in \Cref{def: dinit dminDFA}.
    Both methods use no ancillae. 
    \quoted{Compile time (DFA input)} excludes the DFA instantiation; \quoted{Compile time (set of $\spar$ strings)} is end-to-end.}
    \label{table:our_scaling}
\end{table}

\section{Numerical Experiments}
\label{sec:experiments}
In this section, we present numerical results characterizing the advantage of our compilation strategy and the resource requirements of our two backends, \SeqRLSP and \TreeRLSP. 
We benchmark them against widely used approaches from the literature, comparing performance across the standard figures of merit for resource count: the circuit depth, the total number of quantum gates, and the classical compilation time required to generate the circuit description. 

The compilation time is measured using 72 cores of an Intel(R) Xeon(R) Platinum 8360Y CPU @ 2.40GHz (Icelake) with 512GB of RAM (limited to 100GB). All numerical experiments are implemented in \texttt{Python}\footnote{Our code is available at \url{https://github.com/reinisirmejs/RLSComp}.}. For RLS optimization we use the \texttt{pyformlang} package~\cite{pyformlang}, and for MPS calculations we use \texttt{NumPy}~\cite{harris2020array}. 
When reporting the cost in terms of depth and gate count, we use a native \emph{Qiskit} function to transpile any given unitary (or isometry) into gates from the universal set $[\CNOT, \mathrm{RZ}(\theta), \sqrt{\mathrm{X}}, \mathrm{X}, \mathbb{I}]$. 
The cost of transpiling matrices to a quantum circuit is included in the compilation time.
When using a \emph{set of strings} (list or dictionary) description, we omit the time that the user needs to create the list.

The methods we benchmark against are established state-preparation approaches implemented in popular quantum computing frameworks and discussed in the literature (see Section \cref{sec:literature_review}). 
We summarize them in this list:
\begin{itemize}
    \item \textbf{Qiskit}~\cite{qiskit, qiskit_isometry}: This general-purpose method scales exponentially with the system size $N$; we therefore restrict our benchmarks to systems of up to $N \leq 16$ qubits.
    \item \textbf{Qualtran}~\cite{qualtran, Plesch2011}: An approximate sparse-state preparation method whose number of ancillae scales with $N$ and whose capability is intrinsically limited to $N \leq 32$. The target relative precision is fixed to $2^{-3}$ throughout.
    \item \textbf{Gleinig and Hoefler}~\cite{gleinig2021efficient}: An exact sparse-state preparation algorithm that requires no ancillae. 
    \item \textbf{B\"artschi and Eidenbenz}~\cite{bartschi2019deterministic}: A specialized routine for the deterministic preparation of Dicke states on LNN hardware.
\end{itemize} 

\subsection{Dicke States}

Dicke states~\cite{Dicke_1954} constitute a class of states that can be prepared efficiently within our framework. A Dicke-$k$ state is the uniform superposition of all bitstrings of length $N$ with Hamming weight $k$. 

In \cref{fig:Dicke3}, we present results for Dicke state preparation with $k=3$ across various system sizes $N$.
We compare \SeqRLSP and \TreeRLSP against \emph{Qiskit} \cite{qiskit}, \emph{Qualtran} \cite{qualtran, Plesch2011}, \citet{gleinig2021efficient}, and \citet{bartschi2019deterministic}. Although this state has sparsity $\binom{N}{k}$, which scales as $O(N^{\min(k, N-k)})$, it admits an exact MPS representation with bond dimension $\chi = k + 1$. This power-law scaling with $k$ poses a challenge for sparse state preparation methods, as the sparsity grows rapidly with $N$. As shown in \cref{fig:Dicke3}(c), the sparse methods exceeded compilation times of 100~seconds for moderate systems with $N \approx 20$. The purpose-built, state of the art, Dicke state compiler by \citet{bartschi2019deterministic} prepares the state in $O(N)$ depth and $O(kN)$ gates on LNN connectivity. The \SeqRLSP compiler reproduces the same scaling in $N$, with total gate count $O(k^2 N)$, since $\chi = k + 1$ (see \cref{table:our_scaling}).

On architectures with all-to-all connectivity, the preparation cost can be reduced to $O(k \log N)$~\cite{bartschi2022short}. Once again, the \TreeRLSP method achieves the same $N$-scaling but less favorable $k$-dependence, with circuit depth $O(k^6 \log N)$.
The suboptimal $k$-scaling in both cases might arise from the use of a general-purpose circuit decomposition backend (see \cref{sec:isometry_to_circuit}).

We observe that \SeqRLSP and \citet{bartschi2019deterministic} exhibit the same scaling with $N$, with \citet{bartschi2019deterministic} outperforming \SeqRLSP by only a small constant factor in circuit depth, gate count, and compilation time.
This demonstrates the effectiveness of our general \SeqRLSP approach, which closely matches the performance of a method specifically designed for Dicke states.
While \TreeRLSP initially incurs higher costs, its $O(\log N)$ depth scaling would eventually surpass the $O(N)$ scaling of \SeqRLSP. We illustrate this behavior further in \cref{sec:w_state}, where we analyze the W-state preparation.

\begin{figure}[t]
    \centering
    \includegraphics[width=\linewidth]{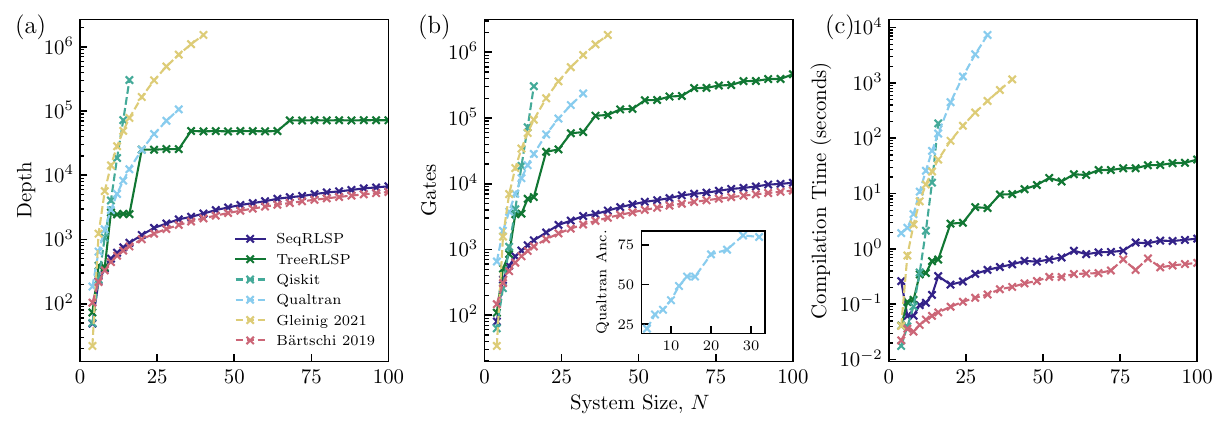}
    \caption{Preparation cost of Dicke$-3$ state for various system sizes $N$. 
    We compare \SeqRLSP, \TreeRLSP, \emph{Qiskit}, \emph{Qualtran}, \citet{gleinig2021efficient}, and \citet{bartschi2019deterministic}. (a) Quantum circuit depth. (b) Total gate count. The inset show the number of ancillae used by \emph{Qualtran}. (c) Compilation time in seconds.}
    \label{fig:Dicke3}
\end{figure}

\begin{figure}[t]
    \centering
    \includegraphics[width=\linewidth]{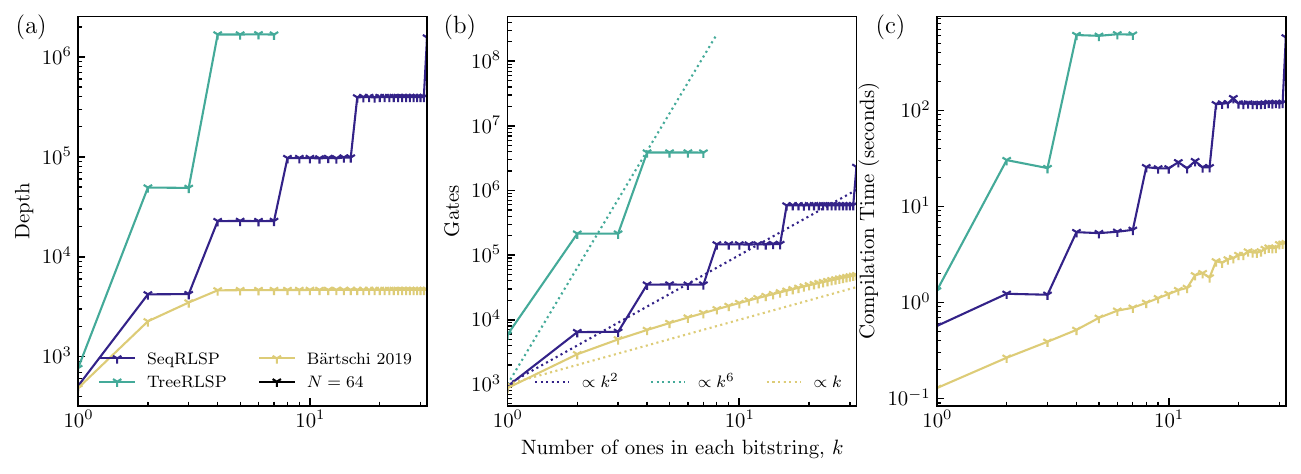}
    \caption{Preparation cost of Dicke$-k$ states on $N=64$ qubits for varying Hamming weight $k$. We compare \SeqRLSP, \TreeRLSP, and \citet{bartschi2019deterministic}. (a) Quantum circuit depth. (b) Total gate count. The dotted trendlines show the expected theoretical $k$-scaling. (c) Compilation time in seconds.}
    \label{fig:dicke_k_scaling}
\end{figure}

In \cref{fig:dicke_k_scaling}, we present results for Dicke state preparation at system size $N=64$ and varying Hamming weight $k$.
We compare the $k$-scaling of \SeqRLSP and \TreeRLSP with \citet{bartschi2019deterministic}. In \cref{fig:dicke_k_scaling}~(b) we numerically verify the $k$-scalings. While \SeqRLSP and \citet{bartschi2019deterministic} show good theoretical agreement, for \TreeRLSP the observed scaling appears more favorable than the expected $k^6$; however, further verification is limited by the large dimensions of the isometries, $(k+1)^4 \times (k+1)^2$, which exceed \emph{Qiskit}’s decomposition capabilities.

A noticeable feature in \cref{fig:dicke_k_scaling} for both \SeqRLSP and \TreeRLSP is the stepwise structure observed in depth and total gate count.
This arises because the isometries are padded to the nearest upper power of two (see \cref{sec:isometry_to_circuit}). Consequently, sudden jumps in circuit depth and gate count occur whenever $\chi = k+1$ exceeds the previous power of two, producing steps at $k = \{2, 4, 8, \dots\}$.

\subsection{W-States}\label{sec:w_state}
The W-state is a special case of a Dicke state with $k=1$~\cite{Dur_2000_Wstate}.
For a system of size $N$, it is an uniform superposition of all $N$ \emph{one-hot} states. In \cref{fig:Wstate}, we report numerical results for W-state preparation across various system sizes $N$, up to $N=256$.
We compare \SeqRLSP and \TreeRLSP against \emph{Qiskit}~\cite{qiskit}, \emph{Qualtran}~\cite{qualtran, Plesch2011}, \citet{gleinig2021efficient}, and \citet{bartschi2019deterministic}.
The low sparsity of the W-state allows \citet{gleinig2021efficient} to reach larger $N$ than in \cref{fig:Dicke3}.
In \cref{fig:Wstate} we observe that the circuit depth and total gate count are very similar among \SeqRLSP, \citet{gleinig2021efficient}, and \citet{bartschi2019deterministic}, while \SeqRLSP consistently achieves the lowest compilation time, even though all methods complete within a few seconds.
For \TreeRLSP, the $O(\log N)$ depth scaling eventually surpasses \SeqRLSP around $N \approx 150$, highlighting its advantage for very large system sizes.
\begin{figure}[t]
\vspace{-0.3cm}
    \centering
    \includegraphics[width=\linewidth]{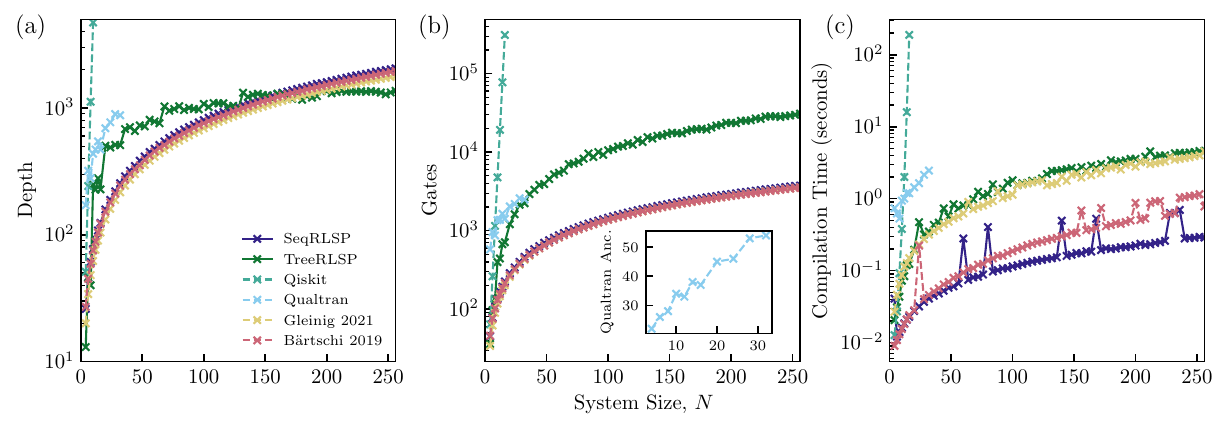}
    \caption{Preparation cost of W-state for various system sizes $N$. We compare \SeqRLSP, \TreeRLSP, \emph{Qiskit}, \emph{Qualtran}, \citet{gleinig2021efficient}, and \citet{bartschi2019deterministic}. (a) Quantum circuit depth. (b) Total gate count. The inset show the number of ancillae used by \emph{Qualtran}. (c) Compilation time in seconds.}
    \label{fig:Wstate}
\end{figure}

\subsection{Complement States}
As highlighted earlier, complement RLSs are particularly well suited for our framework as they can be expressed efficiently as RLSs. In particular, if a state $\ket{\Psi}$ can be represented as an MPS with bond dimension $\chi$, its complement $\ket{\Psi_c} = \ket{+}^{\otimes N} - \ket{\Psi}$ can be expressed as an MPS with bond dimension at most $\chi + 1$~(see \cref{thm:complementRLS}). To the best of our knowledge, there are no specialized compilers designed to efficiently generate complements of otherwise easily preparable states.
In \cref{fig:Dicke2comp}, we report numerical results for the preparation of the complement of a Dicke state with $k=2$.
We compare \SeqRLSP and \TreeRLSP against \emph{Qiskit}~\cite{qiskit}, \emph{Qualtran}~\cite{qualtran, Plesch2011}, and \citet{gleinig2021efficient}.
Since the complement of a Dicke$-2$ state is not sparse, sparse-state preparation methods such as \emph{Qualtran} and \citet{gleinig2021efficient} will exhibit exponential scaling with system size $N$ both in quantum resources and compile time, and are, thus, limited to small $N$.
In contrast, \SeqRLSP and \TreeRLSP remain efficient and are the only approaches capable of handling $N > 20$ for this task.
\begin{figure}[t]
    \centering
    \includegraphics[width=\linewidth]{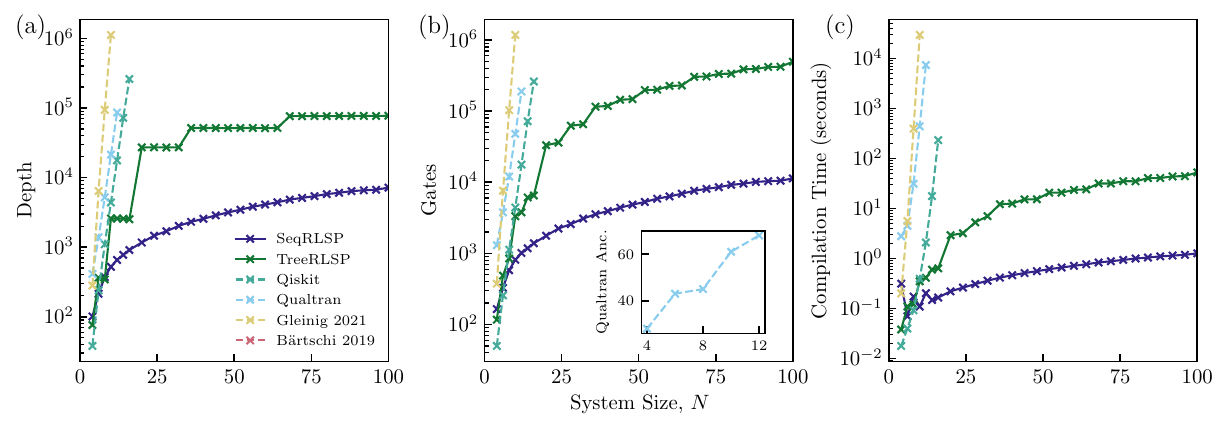}
    \caption{Preparation cost of Dicke$-2$ state complement for various system sizes $N$. We compare \SeqRLSP, \TreeRLSP, \emph{Qiskit}, \emph{Qualtran}, \citet{gleinig2021efficient}, and \citet{bartschi2019deterministic}. (a) Quantum circuit depth. (b) Total gate count. The inset show the number of ancillae used by \emph{Qualtran}. (c) Compilation time in seconds.}
    \label{fig:Dicke2comp}
\end{figure}

\subsection{Random Uniform Superpositions}
Preparing a random uniform superposition of strings is precisely the task for which sparse state preparation algorithms~\cite{qualtran, gleinig2021efficient} are designed.
In contrast, our approaches \SeqRLSP and \TreeRLSP are optimized to exploit underlying structure for compression, an advantage that is absent for random states.
Thus, this subsection examines the \emph{worst-case} performance of our algorithms.
Specifically, representing a superposition of $\spar$ random bitstrings requires an MPS with bond dimension $\chi = \spar$ in the absence of structure.

In \cref{fig:rand_str}, we report the cost of preparing random superpositions for system sizes $N = 10$ and $N = 20$.
As expected, the maximum MPS bond dimension generally grows as $\chi = \spar$. For $N = 10$, a plateau occurs after $\spar \approx 32$. This behavior arises because, for an MPS with $N$ sites, the maximum bond dimension is bounded by $\chi = \min(2^{\lfloor N / 2 \rfloor}, \spar)$ due to the \emph{staircase} structure of the MPS. Stepwise increases in circuit depth and gate count appear whenever $\spar$ crosses the next power of two, particularly evident for $N = 20$.

Notably, for $N = 10$ and $\spar \approx 1024$, approaching a uniform superposition of all bitstrings (i.e., a product state), \SeqRLSP identifies the underlying (complement) structure, producing a sudden decrease in depth and gate count.
The inset in \cref{fig:rand_str} (a) further illustrates that at very large $\spar$, the \emph{staircase} structure reappears, indicating that our approach effectively detects when the resulting state can be compressed, as the complement of $(N-\spar)$ random strings.  In contrast, sparse-state methods are not able to exploit this structure: \emph{Qualtran} fails to recognize the $(N=10, \spar=1024)$ product state, and \citet{gleinig2021efficient} does not converge within a running time exceeding $10^5$ seconds.

\begin{figure}[!ht]
    \centering
    \includegraphics[width=\linewidth]{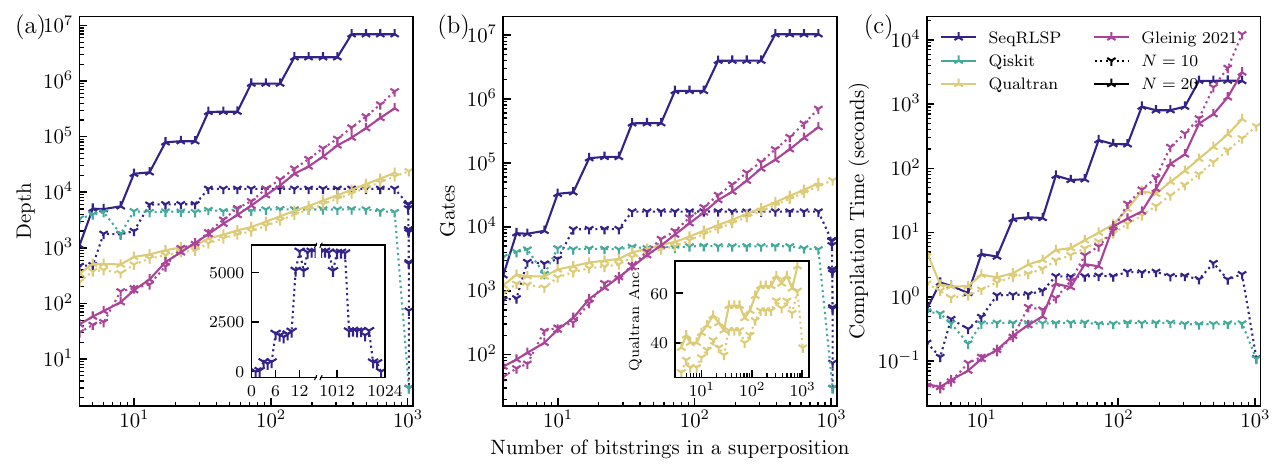}
    \caption{Preparation cost of a quantum states of $N=10$ and $N=20$ qubits consisting of a uniform superposition of $\spar$ random bitstrings. We compare \SeqRLSP, \emph{Qiskit}, \emph{Qualtran}, and \citet{gleinig2021efficient}. (a) Quantum circuit depth. The inset shows the \SeqRLSP behavior at small and large $\spar$. (b) Total gate count. The inset show the number of ancillae used by \emph{Qualtran}. (c) Compilation time in seconds.}
    \label{fig:rand_str}
\end{figure}

\section{Conclusion}
\label{sec:conclusions}
We developed a quantum circuit compiler pipeline that sits between universal but structure-oblivious state preparation routines and bespoke, family-specific constructions.
We targeted RLS and their complements—uniform superpositions of strings that a given regular language accepts, or rejects for the complements.
The pipeline combines a regular-language-based frontend with a DFA/MPS intermediate representation, allowing users to specify the target state via regular expression, DFA, or finite set of strings, without having to identify a named state family in advance.

From this IR, we outline two hardware-aware backends, \SeqRLSP and \TreeRLSP. \SeqRLSP targets linear nearest-neighbor hardware and achieves linear depth in the system size, while \TreeRLSP exploits all-to-all connectivity to reduce the depth to logarithmic in the system size.
In both cases, the DFA/MPS IR is central: by minimizing the DFA and turning the MPS into an isometries network, the compiler can detect and exploit compressibility when it exists.
The DFA IR provides a fast and efficient path from user-level descriptions to MPS constructions, bridging the gap to MPS synthesis strategies that usually assume a good MPS is already available.

Our numerical experiments validate the theoretical resource bounds and show that the compiler can approach the performance of tailored methods for preparing Dicke states, while also recognizing and exploiting structure from unstructured descriptions in settings where generic sparse state preparation routines do not.  
Moreover, they show the power of the complement specification and efficient preparation, which is an entirely novel feature of our approach.

A key design benefit of our approach is its modularity. 
The DFA/MPS IR cleanly separates the user-facing specification from the backend realization, so improvements in regex-to-DFA tools, DFA minimization, MPS synthesis, or isometry decomposition can be integrated without changing the frontend. 
Future work includes extending our methods beyond uniform RLS to states with arbitrary complex-valued amplitudes, and exploring more advanced MPS-to-circuit mappings—such as measurement-and-feedback schemes or more direct syntheses of $1$- and $2$-qubit gate sequences—to further tighten the scaling with $\chi$ and match specialized constructions even more closely.

\acknowledgments{This work was partly carried out during the MPQ Theory Hackathon. The authors thank Khurshed P. Fitter for initial discussions. A.B. thanks Davide Conficconi for valuable feedback and helpful discussions.
The work at MPQ is supported from the German Federal Ministry of Education and Research (BMBF) through the funded project
ALMANAQC, grant number 13N17236 within the research program “Quantum Systems”. The research is partly funded by THEQUCO as part of the Munich Quantum Valley, which is supported by the Bavarian state government with funds from the Hightech Agenda Bayern Plus.}
\medskip

\bibliographystyle{unsrtnat}
\bibliography{bibliography}

\end{document}